\documentclass[5p,final]{elsarticle}
\usepackage{amsfonts}
\usepackage{amssymb}
\usepackage{amsmath}  
\usepackage{amsthm}
\usepackage{enumitem} 
\usepackage{geometry}
\usepackage{graphicx}
\usepackage{algorithm}
\usepackage{algcompatible}
\usepackage{algpseudocode}
\usepackage[colorlinks=true, allcolors=blue]{hyperref}
\usepackage[utf8]{inputenc} 
\usepackage[english]{babel} 
\usepackage{url}
\usepackage{subfiles}
\usepackage{xfrac}
\usepackage{cleveref}
\usepackage{balance}
\usepackage{svg}
\usepackage{booktabs,longtable,array,tabularx,multirow}
\usepackage{tikz}
\usetikzlibrary{shapes.geometric, arrows, arrows.meta, positioning, calc}
\usepackage{rotating}
\usepackage{tikz-3dplot}
\usepackage[squaren]{SIunits}
\usepackage{caption}
\usepackage{natbib}

\usepackage[normalem]{ulem}
\usepackage{subcaption}
\usepackage{float}
\usepackage{placeins}
\usepackage{tocloft}

\newtheorem{theorem}{Theorem}
\newtheorem{lemma}{Lemma}
\newtheorem{definition}{Definition}
\newtheorem{proposition}{Proposition}
\newtheorem{corollary}{Corollary}

\newcommand{\bc}{\begin{center}}
\newcommand{\ec}{\end{center}}
\newcommand{\be}{\begin{equation}}
\newcommand{\ee}{\end{equation}}
\newcommand{\bea}{\begin{eqnarray}}
\newcommand{\eea}{\end{eqnarray}}
\newcommand{\beq}{\begin{eqnarray*}}
\newcommand{\eeq}{\end{eqnarray*}}
\newcommand{\bv}{\left( \begin{array}{c} }
\newcommand{\ev}{\end{array} \right) }

\newcommand{\E}{\mathbb{E}}

\newcommand{\Var}{\mbox{Var}}

\newcommand{\Prob}{\mathbb{P}}

\newcommand{\dd}{\,\mathrm{d}}
\newcommand{\Du}{\Delta u}

\newcommand{\cL}{\mathcal{L}}

\newcommand{\cI}{\mathcal{I}}
\newcommand{\pr}{p_r}

\newcommand{\sigloc}{\sigma_{\mathrm{loc}}}

\newcommand{\Aeff}{\mathcal{A}_{\mathrm{eff}}}
\newcommand{\Fcal}{\mathcal{F}}
\newcommand{\Xiop}{\Xi}

\newtheorem{remark}{Remark}
\newtheorem{assumption}{Assumption}

\begin{document}
\title{Option prices from operational-time reaction-boundary lattices}
\author[unsw-mathstats]{Chris Angstmann}\ead{c.angstmann@unsw.edu.au}
\address[unsw-mathstats]{School of Mathematics and Statistics, University of New South Wales, Sydney, NSW 2052, Australia}
\author[uct-sta]{Tim Gebbie} \ead{tim.gebbie@uct.ac.za}
\address[uct-sta]{Department of Statistical Sciences, University of Cape Town, Rondebosch 7701, Western Cape, South Africa}

\begin{abstract}
We consider the role of a continuum operational time $u$, its mapping to calendar time $t$, and their relation to event time in option-pricing problems. We derive option-pricing equations from an operational-time Markov lattice rather than from a calendar-time diffusion. The primitive model is a homogeneous nearest-neighbour log-price lattice; state-dependent local variance is represented through its general local-kernel extension. Its Chapman--Kolmogorov decomposition yields discrete forward and backward equations. In price variables, the backward equation gives a generalized European pricing PDE and reduces to Black--Scholes--Merton under the risk-neutral drift restriction and constant volatility. Interpreted as a reaction boundary, the lattice gives a structural route from boundary variance to projected local volatility. The key point is the separation of the operational kernel, the calendar-time clock projection, and the pricing-measure choice. Within the deterministic one-factor local-volatility class, an option surface identifies the projected clock--variance combination rather than its operational variance and clock components separately. The resulting hierarchy also distinguishes full calendar-generator equivalence from weaker terminal or continuum matching, and clarifies why incompleteness concerns unspanned clock, jump, or renewal risks rather than random time alone.
\end{abstract}

\begin{keyword}
event-time \sep calendar-time \sep operational time \sep discrete time random walks \sep Markov lattice dynamics \\
{\it Subject Areas:} 
2020 MSC: Primary 91G20; Secondary 60G50, 60K50, 82C41, 91G80.
JEL: G13, C02, G14.
\end{keyword}

\maketitle
\tableofcontents

\section{Introduction} \label{sec:intro}

The Black-Scholes-Merton (BSM) equation is usually introduced as a continuous-time pricing equation obtained from geometric Brownian motion, It\^o's lemma, and dynamic replication, and this remains the standard presentation in much of the finance literature \cite{BlackScholes1973,Merton1973}. At the same time, it has long been known that BSM can also be recovered as a limiting case of a discrete-time arbitrage model: \citet{CoxRossRubinstein1979}  explicitly framed their binomial construction as a simple discrete model whose continuous limit contains Black-Scholes as a special case. 

Here we instead start from a nearest-neighbour discrete Markov lattice for the log-price and derive, in a self-contained manner, and without invoking It\^o calculus, both the backward pricing equation for European claims and the associated forward evolution of transition densities. This is not novel in and of itself because what is already known is that BSM admits both continuous-time and discrete-time derivations; what is less standard, and central here, is the explicit use of the discrete Markov kernel itself as the primitive object from which the continuum pricing equations are obtained and interpreted \cite{Kolmogorov1931,Feller1952,BlackScholes1973,Merton1973}. 

In probability theory this is the Kolmogorov--Feller perspective: forward equations evolve densities, while backward equations evolve conditional expectations or contingent claims \cite{Kolmogorov1931,Feller1952}. The discrete arrival equation and the discrete backward valuation equation can be seen as the fundamental, or primitive, adjoint pair from which the continuum forward and backward parabolic equations emerge under the same scaling limit \cite{Kolmogorov1931,Feller1952,BlackScholes1973,Merton1973}. Here this stages the derivation of price and option-pricing models from discrete order-flow models sampled asynchronously in event time.

In finance, the backward equation is the natural vehicle for pricing and hedging a specified payoff because it explicitly discounts payoffs, whereas the forward equation is the natural vehicle for propagating the risk-neutral density and for connecting option prices across strikes and maturities because it is useful for calibration and thus practical hedging. In the local-volatility literature they appear explicitly through the forward density equation and the Dupire relation linking local diffusion to the surface of European option prices. 


A third well-established strand of the literature shows that BSM is not the only possible continuum limit of a discrete or microscopic market model. If one retains a local diffusive scaling with finite variance and weak asymmetry, one obtains geometric Brownian motion or more general local-volatility diffusions; if finite jumps survive, one obtains jump-diffusions or L\'evy-type nonlocal generators; and if random waiting times with heavy tails are retained, one obtains continuous-time random walk (CTRW) or fractional models with memory \cite{Merton1976,ContTankov2004,Clark1973,CarrWu2004,MontrollWeiss1965,MetzlerKlafter2000,ScalasGorenfloMainardi2000,MainardiRabertoGorenfloScalas2000,GorenfloMainardiScalasRaberto2001,CarteaDelCastilloNegrete2007}. Each of these branches is already well known in its own literature: local volatility through Dupire- and Derman--Kani-type arguments \cite{DermanKani1994}, jump and L\'evy extensions through nonlocal pricing operators, and CTRW / fractional finance \cite{ScalasGorenfloMainardi2000, MainardiRabertoGorenfloScalas2000, GorenfloMainardiScalasRaberto2001, MeerschaertScheffler2004, CarteaDelCastilloNegrete2007}. 

Here we start from one discrete Markov/DTRW representation and show in a uniform way how these apparently different pricing regimes arise from different assumptions on the one-step kernel, the waiting-time law, and the limiting scaling procedure \cite{Merton1976,ContTankov2004,MontrollWeiss1965,MetzlerKlafter2000,ScalasGorenfloMainardi2000,MainardiRabertoGorenfloScalas2000,GorenfloMainardiScalasRaberto2001,BaeumerMeerschaert2001,MeerschaertScheffler2004,MeerschaertScheffler2008,CarteaDelCastilloNegrete2007}.


\subsection{Main contributions}
\label{ssec:main_contributions}
The endpoint equations used below are mostly known. The contribution is in how they are put together and what becomes visible when the operational kernel is kept separate from calendar time and from the pricing rule.
\begin{enumerate}[label=(\roman*)]
\item \textbf{Operational-time primitive.} A discrete operational-time transition kernel for a reaction boundary is primitive, rather than a diffusion already specified in calendar time. Forward and backward equations arise first as discrete adjoint relations. \ref{app:generator_convergence} separates the classical expository approach used in the body of the paper with its motivation from process-level generator and martingale-problem convergence.
\item \textbf{Separation of kernel, clock, and pricing measure.} The operational kernel $K_{\Delta u}$, the map $u=U(t)$, and the pricing measure or rule $Q$ are distinct modelling choices. \ref{app:deterministic_clock_projection} proves the deterministic time-only projection and states its boundary relative to state-dependent, direct-subordinator, inverse, and stochastic clocks -- these are heuristically motivated in the body of the text for expository purposes.
\item \textbf{Reaction-boundary interpretation and clock--variance identification.} Local volatility is an activity-rescaled conditional variance rate of the reaction boundary. \ref{app:localvolfromlinearlob} explains how the finite-scale boundary variance derived in the companion paper enters the general local kernel, while \ref{app:clock_variance_equivalence} shows why a deterministic one-factor option surface identifies the projected coefficient rather than its operational variance and clock components separately.
\item \textbf{Unified hierarchy and completeness consequences.} Local diffusion, jump, L\'evy, and renewal or fractional branches sit in one hierarchy. Forward/backward adjointness, terminal-law matching, uniqueness of the pricing measure, and completeness are separate questions. \ref{app:finite_mesh_non_equivalence} gives a finite-mesh option example, while \ref{app:unspanned_clock_risk} shows that randomness alone is not the issue: incompleteness arises when economically relevant clock risk is unspanned or otherwise left unpriced.
\end{enumerate}
\paragraph{Notation.}
The event counter is $n$, operational time is $u$, and calendar time is $t$. Uppercase symbols denote processes or random variables: $S_n$ is the event-time log-price, $S(u)$ the operational-time log-price, and $S_t:=S(U(t))$ its deterministic calendar-time projection, with $X_n=e^{S_n}$, $X(u)=e^{S(u)}$, and $X_t=e^{S_t}$. Lowercase $s$ denotes a generic log-price state and $x=e^s$ the corresponding price state. We write $U(t)$ for a deterministic clock function and $U_t$ for a stochastic clock process.

Under local finite-variance scaling and the convergence conditions stated in the appendices, the lattice processes converge to a diffusion whose generator gives the usual continuum adjoint pair. The remainder of the paper develops these contributions in four steps using classical methods:
\begin{enumerate}[label=(\roman*)]
    \item first, provide a simple template demonstration of the derivation of the option pricing equation from a discrete lattice model starting from first principles;
    \item next, show explicitly how the BSM PDE emerges only in the appropriate diffusion limit;
    \item then, explain how richer discrete kernels generate more general continuous models, including jump and Lévy specifications; and
    \item finally, expand the framework to include fractional continuous-time random walk (CTRW) limits and their implications for option pricing and no-arbitrage.
\end{enumerate}
The discrete framework then naturally generates a hierarchy of continuous limits where Table~\ref{tab:hierarchy} summarizes the main cases. 

The framework is used to show which microscopic assumptions: finite variance, weak bias, locality, jump persistence, or heavy-tailed waiting times; lead respectively to BSM, local-volatility, jump/L\'evy, or fractional alternatives. This places continuous-time pricing equations within a single discrete-to-continuum hierarchy and uses it to clarify how different scaling regimes alter pricing, hedging, no-arbitrage, and market completeness \cite{Kolmogorov1931,Feller1952,BlackScholes1973,Merton1973,Merton1976,ContTankov2004,MontrollWeiss1965,MetzlerKlafter2000,ScalasGorenfloMainardi2000,MainardiRabertoGorenfloScalas2000,GorenfloMainardiScalasRaberto2001,CarteaDelCastilloNegrete2007}. Figure~\ref{fig:compact_dtrw_to_bsm} shows the calendar-time BSM route, while Figure~\ref{fig:dtrw_hierarchy_extended} gives the broader operational-time hierarchy. 

\begin{figure}[htbp]
\centering
\resizebox{\columnwidth}{!}{
\begin{tikzpicture}[
  >=Latex,
  every node/.style={font=\scriptsize},
  box/.style={draw, rectangle, align=center, text width=34mm,
    minimum height=9mm, inner sep=2.5mm, line width=0.55pt},
  key/.style={box, line width=0.75pt},
  pair/.style={box, text width=32mm, minimum height=15mm},
  arrow/.style={->, line width=0.65pt},
  node distance=7mm and 8mm
]
\node[key] (lattice) {Nearest-neighbour log-price lattice\\identity clock $U(t)=t$};
\node[box, below=of lattice] (kernel) {One-step kernel $K_{\Delta t}$\\Chapman--Kolmogorov composition};
\node[pair, below=9mm of kernel, xshift=-22.5mm] (forward)
  {Discrete forward\\equation\\(arrival law)};
\node[pair, below=9mm of kernel, xshift=22.5mm] (backward)
  {Discrete backward\\equation\\(transition recursion)};

\coordinate (pairmid) at ($(forward.south)!0.5!(backward.south)$);
\coordinate (merge) at ($(pairmid)+(0,-6.35mm)$);
\node[key, minimum height=15.62mm, below=10mm of merge] (limit)
  {\mbox{Formal generator}\\expansion + finite-variance\\convergence};
\node[box, below=of limit] (adjoint)
  {Adjoint continuum pair\\$\partial_t\rho=\mathcal L^{*}\rho$, $-\partial_tV=\mathcal LV$};
\node[box, below=of adjoint] (transform)
  {Price transform $x=e^s$\\$\partial_{ss}=x^2\partial_{xx}+x\partial_x$};
\node[box, below=of transform] (pricing)
  {Choose $Q$ and impose the\\risk-neutral drift restriction};
\node[box, below=of pricing] (pde)
  {General backward pricing PDE\\$V_t+\mathcal A^QV-rV=0$};
\node[key, below=of pde] (bsm)
  {Black--Scholes--Merton\\constant projected variance};

\draw[arrow] (lattice) -- (kernel);
\draw[arrow] (kernel.south) -- ++(0,-3mm) -| (forward.north);
\draw[arrow] (kernel.south) -- ++(0,-3mm) -| (backward.north);
\draw[line width=0.65pt] (forward.south) -- ++(0,-5mm) -| (merge);
\draw[line width=0.65pt] (backward.south) -- ++(0,-5mm) -| (merge);
\draw[arrow] (merge) -- (limit);
\draw[arrow] (limit) -- (adjoint);
\draw[arrow] (adjoint) -- (transform);
\draw[arrow] (transform) -- (pricing);
\draw[arrow] (pricing) -- (pde);
\draw[arrow] (pde) -- (bsm);
\end{tikzpicture}}
\caption{Calendar-time identity-clock route from a nearest-neighbour log-price lattice to BSM. The Taylor expansion formally identifies the candidate generator under weak bias, locality, and finite-variance scaling; process-level convergence is stated separately in \ref{app:generator_convergence}. The discrete adjoint equations are developed in Sections~\ref{ssec:forward_operational_discrete}--\ref{ssec:backward_operational_discrete}, and the calendar-time pricing specialization in Proposition~\ref{prop:lattice_to_bsm_rigorous} and Section~\ref{ssec:nearest_neighbour_branch}.}
\label{fig:compact_dtrw_to_bsm}
\end{figure}

\subsection{Forward versus backward equations}
\label{sec:forward_backward_localvol}

The forward equation (see \cref{ssec:forward_operational_discrete}) evolves the transition density (or state-price density, once discounted under a pricing measure) as a function of the future state and future time, holding the current state fixed; by contrast, the backward equation (see \cref{ssec:backward_operational_discrete}) evolves conditional expectations as a function of the current state and current time, holding the terminal payoff fixed. In probabilistic language, the forward equation is the evolution equation for distributions, while the backward equation is the evolution equation for test functions or contingent claims under the infinitesimal generator. 

In the paper's discrete setting this distinction already appears at the lattice level: equation~\eqref{eq:forward_operational_discrete} is an arrival, or forward relation, while equation~\eqref{eq:backward_operational_discrete} is the backward discrete
Kolmogorov relation; after passage to the diffusion limit they become the
adjoint forward and backward parabolic equations
\eqref{eq:forward_operational_limit} and
\eqref{eq:backward_operational_limit} associated with the same local generator: \cite{Kolmogorov1931,Feller1952}.

Briefly, for a time-inhomogeneous Markov generator acting on smooth test functions \(\varphi(x,t)\),
\[
(\mathcal L_t \varphi)(x,t)=b(x,t)\partial_x\varphi(x,t)
+\tfrac12 a(x,t)\partial_{xx}\varphi(x,t),
\]
the backward pricing equation operates on test functions $V(x,t)$:
\[
\partial_t V+\mathcal L_t V-rV=0, \qquad V(x,T)=g(x).
\]
The forward equation acts on densities $p(x,t)$:
\[
\partial_t p=\mathcal L_t^*p
=-\partial_x\{b(x,t)p\}+\tfrac12\partial_{xx}\{a(x,t)p\},
\]
through the adjoint operator \cite{Kolmogorov1931,Feller1952}. 

In finance the distinction is not merely formal. The backward equation is the natural object for pricing a given payoff: starting from the terminal condition $V(x,T)=g(x)$, one propagates the value backward to obtain the current option price: $V(x,t)$. 

\subsection{Forward and backward BSM formulations} \label{ssec:fwdbwkBSM}

In the classical BSM setting the two formulations are, under the standard assumptions, simply two sides of the same model. Under the risk-neutral measure, if the stock follows a geometric Brownian motion, with short rate \(r\) and
continuous dividend yield \(q\) \cite{Merton1973}, then the option value solves the familiar backward parabolic BSM PDE
\begin{equation}
V_t+\frac{1}{2}\sigma^2 x^2 V_{xx}+(r-q)xV_x-rV=0, 
\end{equation}
with terminal condition $V(x,T)=g(x)$. 
Here the discrete lattice first produces a forward relation for the transition probabilities, then an adjoint backward relation, and finally the generalized pricing equation after transformation from log-price to price variables. In this sense, the BSM PDE can be viewed as a backward pricing manifestation of a particular diffusive scaling limit of a discrete Markov kernel. 

The conceptual point is that if one changes the scaling regime, the one-step kernel, or the admissible limiting moments, the backward pricing equation changes because the underlying generator changes. The forward and backward equations remain adjoints only insofar as they continue to arise from the same underlying pricing dynamics \cite{Kolmogorov1931,Feller1952,BlackScholes1973,Merton1973}. All these choices have already been made {\it a-priori} in the conventional continuous-time stochastic analysis approach.

\subsection{Local volatility} \label{ssec:localvolbridge}
The local-volatility case sits naturally between the general lattice discussion and the BSM specialization. In the hierarchy table (See Figure~\ref{fig:dtrw_hierarchy_extended} and Table~\ref{tab:hierarchy}), a state- and time-dependent diffusion limit leads to a general It\^o diffusion
\begin{equation}
\mathrm dX_t=b^Q(X_t,t)\,\mathrm dt+X_t\sigma_{\mathrm{loc}}(X_t,t)\,\mathrm dW_t,
\end{equation}
with backward pricing PDE
\begin{equation}
V_t+b^Q(x,t)V_x+\frac{1}{2}x^2\sigma_{\mathrm{loc}}^2(x,t)V_{xx}-rV=0.
\end{equation}
This includes the BSM model as the special case $b^Q(x,t)=(r-q)x$
and $\sigma_{\mathrm{loc}}(x,t)=\sigma$, but also allows a local-volatility specification in which the relative diffusion coefficient varies deterministically with the current state and time.

The forward equation is then the corresponding Fokker--Planck equation for the risk-neutral density, while the backward equation prices claims under the same local generator. Thus local volatility is not a different pricing principle from BSM; rather, it is a more general state-dependent diffusion realization of the same forward/backward duality \cite{BlackScholes1973,Merton1973,Kolmogorov1931,Feller1952}.

A backward local-volatility PDE tells us how a particular payoff should be valued once the relative-volatility function $\sigma_{\mathrm{loc}}(x,t)$ and the risk-neutral drift are specified. A forward local-volatility equation, in contrast, tells us how the entire risk-neutral density surface evolves. Hence local volatility is especially natural when one wants to link pricing to the cross-section of options, because a single forward density at maturity prices every terminal payoff by integration. 

By contrast, the backward PDE is often the more convenient vehicle for computing one price and its Greeks, especially for non-smooth payoffs or boundary-value formulations. 
Both are consistent when they come from the same local diffusion under the same measure \cite{Kolmogorov1931,Feller1952,BlackScholes1973,Merton1973} given that the same global calendar time is used. 

\subsection{Forward and backward equivalence} \label{ssec:equivalence}
The equivalence between the forward and backward formulations becomes weaker---or can fail outright---once one moves beyond the idealized setting where the two operators are adjoint and the integral pricing representation links them exactly. There are three important ideas here that should not be conflated: First, a Markov generator may still possess a forward/backward adjoint pair, even when it is nonlocal;  Second, the existence of such an adjoint pair does
not by itself imply that the pricing measure is unique;  Third, even under a chosen pricing measure, exact replication may fail if the traded assets do not span all sources of risk. 

Thus what weakens beyond BSM is not
necessarily the algebraic forward/backward duality, but the stronger identification between density propagation, unique valuation, and dynamic hedging. The most important reasons for this are:
\begin{enumerate}
    \item \textbf{Finite mesh / genuinely discrete models.} At non-vanishing $\Delta t,\Delta S$, the correct objects are the discrete forward and backward recursions, not a continuous PDE. Different discretizations with the same first two limiting moments can have materially different finite-mesh option prices, hedging errors, and convergence speeds. \ref{app:finite_mesh_non_equivalence} gives a two-period example in which all European claims at the common maturity agree exactly, while a discretely monitored one-touch claim differs because activity is distributed differently across the two dates. Thus two lattices may share the same continuum limit but still differ at traded maturities or coarse sampling frequencies.

    \item \textbf{Jumps and nonlocal kernels.} Once finite jumps survive the limit, the pricing equation becomes a PIDE rather than a local PDE, and the forward equation carries a nonlocal integral term as well. The forward/backward adjoint relationship still exists at the generator level, but the market is generally no longer complete, and no-arbitrage need not select a unique pricing measure \cite{Merton1976,ContTankov2004}.

    \item \textbf{Fractional CTRW / non-Markovian clocks.} In the CTRW setting with heavy-tailed waiting times, the process can lose the Markov property in calendar time. Then the usual semigroup duality behind the standard forward/backward pair must be reformulated with memory kernels or time-fractional operators, and the classical replication logic behind BSM ceases to apply in its original form. Pricing may still be arbitrage-free under a suitable equivalent pricing measure, but uniqueness is generally lost and incompleteness becomes the natural generic case \cite{MontrollWeiss1965,MetzlerKlafter2000,ScalasGorenfloMainardi2000,MainardiRabertoGorenfloScalas2000,GorenfloMainardiScalasRaberto2001,CarteaDelCastilloNegrete2007}.

    \item \textbf{Additional risk factors.} If the apparent local volatility is only a projection of a higher-dimensional model (for example, one with hidden volatility, rough volatility, liquidity, regime, or jump state variables), then a one-factor forward density may fit European prices while the backward pricing of path-dependent or hedging-sensitive claims differs. In that case ``equivalence'' holds only for the restricted set of payoffs spanned by the projected marginal law, not for the full trading problem. This is one reason why matching one-dimensional terminal distributions need not imply equivalent hedging performance or market completeness. Rough-volatility models make a similar point: that the marginal option surface may be well represented, but the volatility state is a separate rough factor, so hedging requires more than the one-dimensional projected density unless the relevant variance or forward-variance risks are themselves spanned \citep{GatheralJaissonRosenbaum2018,BayerFrizGatheral2016}.
\end{enumerate}

\subsection{Discrete sampling and its implications.} \label{ssec:discretesampling}

Here we emphasize that BSM is a continuum approximation, not the finite lattice itself. We distinguish at least three forms of discreteness: the spacing of the underlying state-time lattice, the monitoring schedule of the payoff, and the rebalancing frequency of the hedge. We then speculate about the role of discrete sampling. These distinctions matter because forward and backward formulations respond differently to coarse sampling.

First, if the lattice is sampled coarsely in time, higher conditional moments of the one-step kernel need not wash out. Then a nominally diffusive model can inherit skewness, kurtosis, or nonlocal effects at option horizons relevant to trading. In the forward equation this appears as departures from a pure local Fokker--Planck form; in the backward equation it appears as higher-order corrections, jump terms, or persistent discretization bias. A coarse discrete model may therefore price short-dated options, digitals, barriers, or deep out-of-the-money claims differently from the continuum local-volatility approximation even when long-maturity vanilla prices look similar.

Second, discrete monitoring of payoffs breaks the naive equivalence between a continuously monitored PDE model and the actual contract specification. This is especially relevant for path-dependent structures, but even for European claims the effective distribution sampled at maturity can depend on whether the underlying dynamics are interpreted on a transaction-time clock, calendar-time clock, or a subordinated/random clock as in CTRW-type models. If the calendar-time dynamics are non-Markovian because of random waiting times, the forward density relevant for contract valuation may differ materially from the one implied by a time-homogeneous diffusion calibrated only at daily sampling \cite{MontrollWeiss1965,MetzlerKlafter2000,ScalasGorenfloMainardi2000,MainardiRabertoGorenfloScalas2000,GorenfloMainardiScalasRaberto2001,CarteaDelCastilloNegrete2007}.

Third, discrete hedging weakens the practical force of backward-PDE replication even in models that are complete in continuous time. In BSM and deterministic local-volatility models, completeness and uniqueness are theoretical statements under frictionless continuous trading. Once trading is discrete, the hedge derived from the backward PDE is only approximate, and the residual P\&L depends on the realized path, local curvature, and the mismatch between the assumed and realized local dynamics. Hence a model may be arbitrage-free and complete in the continuous-time idealization, but still generate significant model risk and hedge slippage under realistic trading schedules. The forward density is then useful for stress testing and scenario aggregation, whereas the backward PDE remains useful for local sensitivities; neither alone captures the full operational risk picture \cite{BlackScholes1973,Merton1973}.

\subsection{What survives beyond BSM?} \label{ssec:noarbitrage}

From the perspective of no-arbitrage (see \cref{ssec:pricing_nonunique_time,sec:operational_hierarchy}): in the BSM diffusion and, more generally, in a one-factor local diffusion with deterministic coefficients, the combination of no-arbitrage and dynamic spanning leads, under ideal assumptions, to a unique risk-neutral pricing rule, so forward and backward formulations are just equivalent computational expressions of the same complete-market model. 

However, once jumps, extra branches, nonlocal kernels, hidden factors, or fractional clocks survive the limit, no-arbitrage generally yields only the existence of at least one equivalent pricing measure, not uniqueness. Then the forward density and the backward pricing equation depend on the chosen measure, and distinct admissible measures can produce distinct option prices for claims not perfectly spanned by traded assets \cite{Merton1976,ContTankov2004,MontrollWeiss1965,MetzlerKlafter2000,ScalasGorenfloMainardi2000,MainardiRabertoGorenfloScalas2000,GorenfloMainardiScalasRaberto2001,CarteaDelCastilloNegrete2007}.

In BSM and deterministic local-volatility diffusions they are equivalent because they are adjoints of the same risk-neutral generator. But this equivalence is a feature of a particular modelling regime---local, Markovian, diffusive, and dynamically spanned---not a universal fact. 

The moment one changes the discrete sampling structure or allows richer lattice limits, the relation between forward law, backward price, and hedge replication becomes more delicate, and the economic conclusions can shift from complete-market uniqueness to incomplete-market choice among admissible pricing measures. This is the value of the discrete-lattice viewpoint: it shows not only how BSM emerges, but also which assumptions are responsible for its apparent universality \cite{Kolmogorov1931,Feller1952,BlackScholes1973,Merton1973,Merton1976,ContTankov2004,MontrollWeiss1965,MetzlerKlafter2000,ScalasGorenfloMainardi2000,MainardiRabertoGorenfloScalas2000,GorenfloMainardiScalasRaberto2001,CarteaDelCastilloNegrete2007}.

\subsection{The derivation in calendar time}

For the identity-clock specialization, the generalized option pricing PDE is derived from a discrete Markov lattice whose nearest-neighbour log-price step is written directly in calendar time $t$. The derivation is visualised in Figure \ref{fig:compact_dtrw_to_bsm}. The BSM PDE is recovered as only one member of a broader hierarchy of continuum limits. Other discrete kernels lead naturally to local-volatility diffusions, jump-diffusions, exponential Lévy models, and fractional CTRW-based dynamics, and so on. As soon as nonlocal jumps, extra state variables, or anomalous waiting times survive the limit, the pricing equation needs to be modified accordingly and this suggests that classical completeness/uniqueness conclusions generally fail.
\begin{proposition}[Calendar-time lattice-to-BSM framework]
\label{prop:lattice_to_bsm_rigorous}
For this proposition only, take the deterministic identity clock $U(t)=t$, so that the lattice step can be written in calendar time as $\Delta t$. The minimal branch is a nearest-neighbour Markov chain $S^{\Delta}_n=\ln X^{\Delta}_n$ on a homogeneous log-price lattice with spacing $\Delta S>0$ and time step $\Delta t>0$: from $(s,t)$ it moves to $s+\Delta S$ with probability $p_r^{\Delta}(s,t)$ and to $s-\Delta S$ otherwise. This branch has constant leading second moment. The state-dependent extension below is understood through a local triangular-array kernel, state-dependent local spacing, or holding intensity with the same limiting first two moments; \ref{app:localvolfromlinearlob} gives an explicit realisation. Let $x=e^s$ denote the corresponding price variable. Assume the following.
\begin{enumerate}[label=(\alph*)]
    \item \textbf{Discrete backward equation.} \label{item:prop1_discrete_backward} For fixed terminal state and time $(s,T)$, the transition density or transition mass satisfies the one-step backward Kolmogorov relation
    \begin{equation}
    \begin{aligned}
    \rho_S^{\Delta}(s,T&\mid s_0,t_0)
    =p_r^{\Delta}(s_0,t_0)\rho_S^{\Delta}(s,T\mid s_0+\Delta S,t_0+\Delta t)
    \\
    &+\bigl[1-p_r^{\Delta}(s_0,t_0)\bigr]\rho_S^{\Delta}(s,T\mid s_0-\Delta S,t_0+\Delta t).
    \end{aligned}
    \label{eq:prop1_discrete_backward}
    \end{equation}
    \item \textbf{Local finite-variance scaling.}  \label{item:prop1_scaling_D_mu}
    There exist locally bounded functions $D(s,t)>0$ and $\mu(s,t)$ such that, locally uniformly in $(s,t)$,
    \begin{align}
    D(s,t)&=\lim_{\Delta t\to0}\frac{1}{2\Delta t}\mathbb E^{\Delta}\!\left[(S^{\Delta}_{n+1}-S^{\Delta}_n)^2\mid S^{\Delta}_n=s\right],
    \\ \nonumber 
    \mu(s,t)&=\lim_{\Delta t\to0}\frac{1}{\Delta t}\mathbb E^{\Delta}\!\left[S^{\Delta}_{n+1}-S^{\Delta}_n\mid S^{\Delta}_n=s\right].
    \label{eq:prop1_scaling_D_mu}
    \end{align}
    For the homogeneous nearest-neighbour lattice these reduce to
    \begin{align}
    D&=\lim_{\Delta S,\Delta t\to0}\frac{(\Delta S)^2}{2\Delta t},
    \\
    \mu(s,t)&=\lim_{\Delta S,\Delta t\to0}\frac{\Delta S}{\Delta t}\bigl(2p_r^{\Delta}(s,t)-1\bigr).
    \label{eq:prop1_nn_scaling_D_mu}
    \end{align}
    Thus the homogeneous fixed-mesh branch has constant $D$. State dependence belongs to the stated local-kernel extension, not to the right-jump probability alone. In the constant-volatility BSM specialization $D=\sigma^2/2$.
    \item \textbf{Regularity and convergence.} For the formal derivation, the limiting coefficients and test functions are sufficiently regular for the Taylor expansion to identify the candidate backward generator
    \begin{equation}
    (\mathcal L_t^S\varphi)(s)=\mu(s,t)\partial_s\varphi(s)+D(s,t)\partial_{ss}\varphi(s).
    \label{eq:prop1_log_generator}
    \end{equation}
    For the process-level statement, assume the local moment convergence, large-increment control, compact containment, and well-posed limiting martingale problem in \ref{app:generator_convergence}. Then the interpolated lattice processes converge weakly to the Markov diffusion associated with the time-inhomogeneous generator $\mathcal L_t^S$. Convergence of transition densities is a stronger local-limit statement and is not needed unless the density representation below is used.
    \item \textbf{Pricing measure and discounting.} A pricing measure $Q$ has been selected on the limiting calendar-time model, the short rate $r$ and dividend yield $q$ are deterministic, and the discounted cum-dividend price is a $Q$-martingale. Within this equivalent local-diffusion change of measure, quadratic variation is unchanged, so $D^Q=D$, whereas the drift is measure dependent. In price coordinates the risk-neutral drift restriction is
    \[
    \mu^Q(s,t)+D^Q(s,t)=r(t)-q(t).
    \]
    For constant $D^Q=D$ this reduces to
        \begin{equation}
        \mu^Q(s,t)=r-q-D.
        \label{eq:prop1_rn_log_drift_D}
        \end{equation}
\end{enumerate}
Before imposing the pricing restriction, the limiting log-price transition law satisfies the backward equation
\begin{align}
-\partial_{t_0}\rho_S(s,T\mid s_0,t_0)
=&\mu(s_0,t_0)\partial_{s_0}\rho_S(s,T\mid s_0,t_0)
\quad \nonumber \\ 
&+D(s_0,t_0)\partial_{s_0s_0}\rho_S(s,T\mid s_0,t_0).
\label{eq:prop1_backward_log_limit}
\end{align}
Under the transformation $x=e^s$, the price-coordinate generator acting on smooth functions $V(x,t)$ is
\begin{align}
\mathcal L^X V(x,t)
=&D(\ln x,t)x^2V_{xx}(x,t) \nonumber \\
&+\bigl[\mu(\ln x,t)+D(\ln x,t)\bigr]xV_x(x,t),
\label{eq:prop1_price_generator_D}
\end{align}
because $\partial_s=x\partial_x$ and $\partial_{ss}=x^2\partial_{xx}+x\partial_x$.
Consequently, for a European payoff $g$ at maturity $T$, define
\begin{equation}
 V(x,t)=\mathbb E^Q\!\left[B_r(t,T)g(X_T)\mid X_t=x\right].
 \label{eq:prop1_european_price_expectation}
\end{equation}
When a transition density exists and $r$ is constant, this is equivalently
\begin{equation}
V(x,t)=e^{-r(T-t)}\int_0^{\infty}g(y)\rho_X(y,T\mid x,t)\,\dd y.
\label{eq:prop1_european_price_integral}
\end{equation}
Under the usual Feynman--Kac regularity or viscosity-solution conditions, the value solves the generalized backward pricing equation
\begin{align}
V_t(x,t)&+D^Q(\ln x,t)x^2V_{xx}(x,t) \nonumber \\
&+\bigl[\mu^Q(\ln x,t)+D^Q(\ln x,t)\bigr]xV_x(x,t)
-rV(x,t)=0,
\label{eq:prop1_general_pricing_D}
\end{align}
where $V(x,T)=g(x)$. If $D^Q=D=\sigma^2/2$ is constant and the risk-neutral log-drift condition \eqref{eq:prop1_rn_log_drift_D} holds, then \eqref{eq:prop1_general_pricing_D} reduces to
\begin{equation}
V_t(x,t)+(r-q)xV_x(x,t)+D x^2V_{xx}(x,t)-rV(x,t)=0,
\label{eq:prop1_bsm_D}
\end{equation}
or, equivalently,
\begin{equation}
V_t(x,t)+(r-q)xV_x(x,t)+\frac{\sigma^2}{2}x^2V_{xx}(x,t)-rV(x,t)=0.
\label{eq:prop1_bsm_sigma}
\end{equation}
This is the Black--Scholes--Merton PDE\footnote{Here, unless otherwise stated, $r$ and $q$ are constant short rate and dividend yield; the deterministic time-dependent cases are $r(t)$ and $q(t)$ with discount factors $B_r(t,T)=\exp({-\int_t^T r(s)ds})$ and $B_q(t,T)=\exp({-\int_t^T q(s)ds)}$. For convenience we often drop the explicit $t$ dependence as these are calendar time by construction unless otherwise stated.}.
\end{proposition}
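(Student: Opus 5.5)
The proof is a chain of four steps, each resting on one of the hypotheses (a)--(d): Taylor expansion of the discrete backward relation, a change of variables, Feynman--Kac, and substitution of the drift restriction.

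\textbf{Step 1: formal continuum backward equation.} Start from the one-step relation \eqref{eq:prop1_discrete_backward}. Write $\rho^{\Delta}(s_0',t_0+\Delta t)$ for the right-hand side evaluated at a generic starting point $s_0'$. Taylor-expand $\rho^{\Delta}(s_0\pm\Delta S,t_0+\Delta t)$ about $(s_0,t_0)$ to second order in $\Delta S$ and first order in $\Delta t$, with remainder $O(|\Delta S|^3+\Delta t\,|\Delta S|+\Delta t^2)$. Because $p_r+(1-p_r)=1$, the zeroth-order terms cancel. The first-order spatial term is $(2p_r^{\Delta}-1)\Delta S\,\partial_{s_0}\rho$. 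The second-order term is $\tfrac12(\Delta S)^2\partial_{s_0s_0}\rho$. Divide by $\Delta t$ and apply the scaling limits \eqref{eq:prop1_nn_scaling_D_mu}: the drift term tends to $\mu$ and the diffusion term to $D$. The cubic remainder divided by $\Delta t$ is $O(\Delta S)$ and vanishes. This yields \eqref{eq:prop1_backward_log_limit} and identifies the generator \eqref{eq:prop1_log_generator}.

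For the state-dependent extension, run the same expansion with the conditional first and second moments of the local kernel, as in assumption (b). The first two conditional moments are all that survive, so the argument is unchanged.

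\textbf{Step 2: process-level statement (the main obstacle).} Step 1 is only formal; it uses regularity of $\rho^{\Delta}$ uniformly in the mesh. The honest route therefore works on smooth test functions $\varphi$ rather than on $\rho^{\Delta}$. The key identity is
\[
\frac{\mathbb E^{\Delta}[\varphi(S_{n+1})\mid S_n=s]-\varphi(s)}{\Delta t}\to\mathcal L_t^S\varphi(s)
\]
locally uniformly. This gives generator convergence, and it follows from the same expansion with $\varphi\in C_c^3$, where all Taylor remainders are controlled. I would then invoke the martingale-problem convergence theorem cited in the appendix on generator convergence: local moment convergence and large-increment control (trivial here, since jumps have size $\Delta S\to0$), plus compact containment and well-posedness, give weak convergence of the interpolated chains. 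Equation \eqref{eq:prop1_backward_log_limit} for $\rho_S$ then holds classically wherever a smooth density exists, and otherwise only in the distributional or viscosity sense. I expect this regularity bookkeeping to be the delicate point, and I would state it as conditional on the assumed density existence.

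\textbf{Step 3: price coordinates.} Set $V(x,t)=\varphi(\ln x,t)$. The chain rule gives $\partial_s=x\partial_x$, and then
\[
\partial_{ss}=x\partial_x(x\partial_x)=x^2\partial_{xx}+x\partial_x.
\]
Substituting into $\mu\partial_s+D\partial_{ss}$ gives \eqref{eq:prop1_price_generator_D}.

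\textbf{Step 4: pricing equation and BSM.} Under $Q$, Girsanov for an equivalent change of measure on a continuous diffusion leaves quadratic variation unchanged, so $D^Q=D$; only the drift changes, to $\mu^Q$. The $Q$-martingale property of $e^{\int_0^t(q-r)}X_t$ requires the $x$-drift coefficient $\mu^Q+D^Q$ to equal $r-q$. Define $V$ by \eqref{eq:prop1_european_price_expectation}. The process $B_r(0,t)V(X_t,t)$ is a $Q$-martingale, so under the Feynman--Kac hypotheses $V$ solves \eqref{eq:prop1_general_pricing_D}; without that regularity, $V$ is the viscosity solution. When a density exists and $r$ is constant, writing the expectation as an integral gives \eqref{eq:prop1_european_price_integral}. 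Finally, substitute \eqref{eq:prop1_rn_log_drift_D} and $D=\sigma^2/2$ to obtain \eqref{eq:prop1_bsm_D} and \eqref{eq:prop1_bsm_sigma}.
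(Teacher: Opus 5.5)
Your proposal is correct and takes essentially the paper's route. The paper likewise uses a formal Taylor expansion to identify $\mathcal L_t^S$, the generator-convergence lemma on $C_c^3$ test functions with the martingale-problem theorem from the generator-convergence appendix for the process-level claim, the chain rule $\partial_{ss}=x^2\partial_{xx}+x\partial_x$, Feynman--Kac, and the substitution $\mu^Q+D^Q=r-q$. The only difference is cosmetic: the paper's displayed Taylor calculation expands the forward arrival equation and states the backward limit as its adjoint, whereas you expand the backward relation (a) directly, which is closer to what the proposition asserts.
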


\begin{remark}[Interpretation]
The proposition is a readable derivation for a \emph{particular scaling regime}, not a substitute for the convergence theorem. \ref{app:generator_convergence} supplies sufficient process-level conditions. At finite lattice spacing, the dynamics and valuation recursion remain discrete. The BSM PDE appears only after the local, finite-variance, weak-bias limit, the calendar-time interpretation, and the risk-neutral restriction are imposed. The theorem proves process convergence; convergence of discounted lattice option values additionally requires convergence of pricing kernels and discount factors and, for unbounded payoffs, suitable uniform-integrability or moment conditions. That stronger pricing-convergence statement is not claimed here.
\end{remark}

\begin{figure*}[p]
\centering
\resizebox{!}{0.6984\textheight}{
\begin{tikzpicture}[
  >=Latex,
  every node/.style={font=\scriptsize},
  common/.style={draw, rectangle, align=center, text width=52mm,
    minimum height=8mm, inner sep=2mm, line width=0.55pt},
  spine/.style={draw, rectangle, align=center, text width=42mm,
    minimum height=8mm, inner sep=2mm, line width=0.65pt},
  pair/.style={draw, rectangle, align=center, text width=36mm,
    minimum height=9mm, inner sep=2.3mm, line width=0.55pt},
  branch/.style={draw, rectangle, align=center, text width=42mm,
    minimum height=9mm, inner sep=2mm, line width=0.55pt},
  head/.style={branch, fill=black!7, font=\scriptsize\bfseries},
  arrow/.style={->, line width=0.65pt},
  node distance=4mm and 8mm
]
\node[common, xshift=24.5mm] (event) {Event-time reaction boundary};
\node[common, fill=black!7, font=\scriptsize\bfseries, below=of event] (lattice) {OPERATIONAL-TIME MARKOV LATTICE\\{\tiny Secs.~\ref{sec:DTRW}--\ref{sec:generalised_operational_lattice}}};
\node[common, below=of lattice] (kernel)
  {Primitive one-step kernel $K_{\Delta u}$\\Chapman--Kolmogorov composition};

\node[pair, below=9mm of kernel, xshift=-22.5mm] (forward)
  {Discrete forward equation\\(arrival law)};
\node[pair, below=9mm of kernel, xshift=22.5mm] (backward)
  {Discrete backward equation\\(transition recursion)};
\coordinate (pairmid) at ($(forward.south)!0.5!(backward.south)$);
\coordinate (scalechoice) at ($(pairmid)+(0,-5.95mm)$);

\node[spine, below=3.05mm of scalechoice] (limit)
  {Local finite-variance limit\\weak bias and locality};
\node[spine, below=of limit] (adjoint)
  {Adjoint continuum pair\\$\partial_u\rho=\mathcal L_u^{*}\rho$, $-\partial_uV=\mathcal L_uV$};
\node[spine, below=of adjoint] (projection)
  {Deterministic clock projection\\and price transform $x=e^s$};
\node[spine, below=of projection] (pricing)
  {Choose $Q$ and impose the\\risk-neutral drift restriction};
\node[spine, below=of pricing] (pde)
  {General backward pricing PDE\\$V_t+\mathcal A^QV-rV=0$};
\node[spine, below=of pde, line width=0.8pt] (bsm)
  {Black--Scholes--Merton\\constant projected variance\\{\tiny Sec.~\ref{ssec:nearest_neighbour_branch}}};

\node[head] (lvh) at ($(limit)+(-98mm,0)$) {LOCAL-VOLATILITY BRANCH};
\node[branch, below=of lvh] (lv1)
  {Calendar-time local variance\\{\tiny $\sigma_{\mathrm{loc}}^2(x,t)=2\alpha(t)D_u^Q(\ln x,U(t))$}};
\node[branch, below=of lv1] (lv2)
  {Operational local diffusion\\$\mathcal L_u=\mu_u\partial_s+D_u\partial_{ss}$};
\node[branch, below=of lv2] (lv3)
  {Forward Fokker--Planck and\\backward pricing PDE};
\node[branch, below=of lv3] (lv4)
  {Option surface identifies the\\projected clock--variance combination\\{\tiny Secs.~\ref{ssec:state_dependent_localvol_branch}, \ref{sec:localvol_from_reaction_boundary}}};

\node[head] (jumph) at ($(limit)+(49mm,0)$) {NONLOCAL JUMP BRANCH};
\node[branch, below=of jumph] (jump1)
  {Surviving finite jumps or a\\general L\'evy-type kernel};
\node[branch, below=of jump1] (jump2)
  {Nonlocal generator with\\risk-neutral jump kernel $\nu^Q$};
\node[branch, below=of jump2] (jump3) {Jump-diffusion or L\'evy PIDE};
\node[branch, below=of jump3] (jump4)
  {Pricing is generally non-unique\\when jump risk is unspanned\\{\tiny Sec.~\ref{ssec:jump_levy_branch_operational}}};

\node[head] (clockh) at ($(pricing)+(-49mm,0)$) {RANDOM-CLOCK BRANCH};
\node[branch, below=of clockh] (clock1)
  {Direct subordinator, inverse clock,\\or heavy-tailed renewal law};
\node[branch, below=of clock1] (clock2)
  {Direct case: subordinated semigroup\\Inverse/renewal case: memory};
\node[branch, below=of clock2] (clock3)
  {Time-changed or fractional\\pricing equation};
\node[branch, below=of clock3] (clock4)
  {Non-uniqueness only when\\claim-relevant clock or renewal\\risk is unspanned\\{\tiny Sec.~\ref{ssec:random_clock_fractional_branch}}};

\draw[arrow] (event) -- (lattice);
\draw[arrow] (lattice) -- (kernel);
\draw[arrow] (kernel.south) -- ++(0,-3mm) -| (forward.north);
\draw[arrow] (kernel.south) -- ++(0,-3mm) -| (backward.north);
\draw[line width=0.65pt] (forward.south) -- ++(0,-5mm) -| (scalechoice);
\draw[line width=0.65pt] (backward.south) -- ++(0,-5mm) -| (scalechoice);
\draw[arrow] (scalechoice) -- (limit);

\draw[arrow] (limit) -- (adjoint);
\draw[arrow] (adjoint) -- (projection);
\draw[arrow] (projection) -- (pricing);
\draw[arrow] (pricing) -- (pde);
\draw[arrow] (pde) -- (bsm);

\draw[arrow] (limit.west) -- (lvh.east);
\draw[arrow] (lvh) -- (lv1);
\draw[arrow] (lv1) -- (lv2);
\draw[arrow] (lv2) -- (lv3);
\draw[arrow] (lv3) -- (lv4);

\draw[arrow] (limit.east) -- (jumph.west);
\draw[arrow] (jumph) -- (jump1);
\draw[arrow] (jump1) -- (jump2);
\draw[arrow] (jump2) -- (jump3);
\draw[arrow] (jump3) -- (jump4);

\coordinate (clockchannel) at (clockh.center |- projection.west);
\draw[line width=0.65pt] (projection.west) -- (clockchannel);
\draw[arrow] (clockchannel) -- (clockh.north);
\draw[arrow] (clockh) -- (clock1);
\draw[arrow] (clock1) -- (clock2);
\draw[arrow] (clock2) -- (clock3);
\draw[arrow] (clock3) -- (clock4);

\end{tikzpicture}}
\caption{Operational-time hierarchy of continuum pricing limits. The central BSM route repeats the spine of Figure~\ref{fig:compact_dtrw_to_bsm}, while the local-volatility, nonlocal-jump, and random-clock alternatives leave at their corresponding modelling choices. The direct-subordinator statement assumes a time-homogeneous operational semigroup; inverse or renewal clocks generally require memory formulations. Across all branches, the pricing rule or admissible measure $Q$ is a distinct modelling choice: forward/backward adjointness, uniqueness of $Q$, and market completeness are separate questions. The event and operational lattice construction is developed in Sections~\ref{sec:DTRW}--\ref{sec:generalised_operational_lattice}; the limiting branches are treated in Sections~\ref{ssec:nearest_neighbour_branch}--\ref{ssec:random_clock_fractional_branch}. Rough volatility lies beyond the scalar local-volatility branch, requiring the operational variance coefficient or forward-variance curve to include an additional stochastic memory state.}
\label{fig:dtrw_hierarchy_extended}
\end{figure*}

\section{Discrete Event-Time Reaction Boundary}
\label{sec:DTRW}

\subsection{Reaction--diffusion order books in event time}
\label{ssec:reaction_diffusion_event_time}

The primitive object in this construction is not a calendar-time diffusion for the traded price; it is a discrete event system itself. This distinction is important. Let
\[
 n=0,1,2,\ldots
\]
denote the event count, where one event may correspond to an order arrival, cancellation, trade, queue depletion, or any other book event that changes the state relevant for the best bid and best ask. Let $B_n$ and $A_n$ denote the best bid and best ask after event $n$, and let the mid-price be
\begin{equation}
\label{eq:midprice_event}
 M_n:=\frac{A_n+B_n}{2},
 \qquad S_n:=\ln M_n .
\end{equation}
The discrete reaction-boundary interpretation is that the mid-price is the moving interface between event-driven bid-side and ask-side order-flow fields \cite{diana2024anomalous}. If the buy-side field consumes sell-side liquidity faster than the reverse, the boundary tends to move upward; if the sell-side field consumes buy-side liquidity faster, the boundary tends to move downward.

For the first stage we retain only the nearest-neighbour motion of the boundary on a log-price lattice. Let $\Delta S>0$ be the log-price lattice spacing and write
\begin{equation}
\label{eq:event_dtrw_step}
 S_{n+1}=S_n+\Delta S\,\varepsilon_{n+1},
 \qquad \varepsilon_{n+1}\in\{-1,+1\}.
\end{equation}
Here $n$ indexes effective boundary moves. Non-moving book events may instead be absorbed into operational time or represented by holding probabilities in the general kernel.

The event-time transition probabilities are
\begin{align}
\label{eq:event_probabilities}
 \Prob(\varepsilon_{n+1}=+1\mid S_n=s,\mathcal{G}_n)&=p_r(s,n),
 \qquad \text{ and } \\
 \Prob(\varepsilon_{n+1}=-1\mid S_n=s,\mathcal{G}_n)&=1-p_r(s,n),
\end{align}
where $\mathcal{G}_n$ is the event filtration. In the simplest Markov reduction, all relevant order-book information has been projected onto the current boundary location and event index, so that $p_r(s,n)$ is the local right-jump probability of the reaction boundary.

To connect this reduced Markov probability to the order book, introduce a boundary-pressure or imbalance field
\begin{equation}
\label{eq:imbalance}
 \cI(s,n)=\frac{\Lambda_B(s,n)-\Lambda_A(s,n)}{\Lambda_B(s,n)+\Lambda_A(s,n)},
\end{equation}
where $\Lambda_B$ and $\Lambda_A$ are effective bid-side and ask-side reaction intensities near the boundary. A weak-imbalance closure is
\begin{equation}
\label{eq:prob_imbalance}
 2\pr(s,n)-1\simeq \chi\cI(s,n),
\end{equation}
for response coefficient $\chi$. This closure is not needed for the lattice algebra but makes explicit that the jump asymmetry can be interpreted as local order-flow pressure.  This is a finite-mesh closure. Along a diffusion refinement its effective asymmetry must scale as $\Delta u/\Delta S$ to retain a finite drift; \ref{app:scale_clock_rough_caveats} distinguishes this first-moment condition from the boundary-variance calculation. It is a phenomenological closure, not a no-arbitrage restriction.

\begin{remark}[What is being modelled]
The variable $S_n$ is the reaction boundary, not an exogenous geometric Brownian motion. The transition probability $p_r$ is a reduced-form summary of local order-flow imbalance at the boundary. The BSM model will emerge only after a sequence of averaging and limiting operations, and a risk-neutral drift restriction.
\end{remark}

\subsection{Operational time as the continuum limit}
\label{ssec:operational_time_definition}

Event time $n$ is discrete. To take continuum limits while preserving the event-driven interpretation, introduce an operational time $u$ by
\begin{equation}
\label{eq:operational_time_from_events}
 u_n:=n\Delta u,
 \qquad \Delta u>0.
\end{equation}
The continuum operational-time process is the limit of the event-indexed process under the joint refinement
\[
 \Delta S\to0,
 \qquad \Delta u\to0,
 \qquad n\Delta u\to u.
\]
Operational time is therefore the activity clock of the reaction-boundary model in the appropriate limit. It is not, in general, the same object as calendar time $t$ nor the event times $\Delta t_n$ themselves.

Let
\begin{equation}
\label{eq:rho_u_def}
 \rho^{(u)}(s,u\mid s_0,u_0)
\end{equation}
denote the transition mass or density of the log-price boundary in operational time. The superscript $u$ labels the operational-time transition density; it is not a derivative.

\subsection{Discrete forward operational time equation}
\label{ssec:forward_operational_discrete}

The nearest-neighbour one-step kernel in operational time is
\begin{align}
\label{eq:one_step_kernel_operational}
 K_{\Delta u}(s,\mathrm{d}y;u)
 =&p_r(s,u)\,\delta_{s+\Delta S}(\mathrm{d}y) \nonumber \\
 &+\bigl[1-p_r(s,u)\bigr]\,\delta_{s-\Delta S}(\mathrm{d}y).
\end{align}
The Chapman--Kolmogorov relation gives
\begin{equation}
\label{eq:CK_operational}
 \rho^{(u)}(s,u+\Delta u\mid s_0,u_0)
 =\sum_y K_{\Delta u}(y,\{s\};u)\rho^{(u)}(y,u\mid s_0,u_0).
\end{equation}
Since only $y=s-\Delta S$ and $y=s+\Delta S$ can arrive at $s$, this becomes the operational-time arrival equation
\begin{equation}
\label{eq:forward_operational_discrete}
\begin{aligned}
\rho^{(u)}(s,u+\Delta u&\mid s_0,u_0)
=p_r(s-\Delta S,u)\rho^{(u)}(s-\Delta S,u\mid s_0,u_0)
\\
&+\bigl[1-p_r(s+\Delta S,u)\bigr]\rho^{(u)}(s+\Delta S,u\mid s_0,u_0).
\end{aligned}
\end{equation}
The probability multiplying the mass at
$s-\Delta S$ is evaluated here because the chain must jump right from $s-\Delta S$ to arrive at $s$. Similarly, the probability multiplying the mass at
$s+\Delta S$ is evaluated at $s+\Delta S$ because the chain must jump left from
$s+\Delta S$ to arrive at $s$.

\subsection{Discrete backward operational time equation}
\label{ssec:backward_operational_discrete}

Using the first step after $(s_0,u_0)$ gives the adjoint backward relation
\begin{equation}
\label{eq:backward_operational_discrete}
\begin{aligned}
\rho^{(u)}(s,u&\mid s_0,u_0)
=p_r(s_0,u_0)\rho^{(u)}(s,u\mid s_0+\Delta S,u_0+\Delta u)
\\
&+\bigl[1-p_r(s_0,u_0)\bigr]\rho^{(u)}(s,u\mid s_0-\Delta S,u_0+\Delta u).
\end{aligned}
\end{equation}
This is the operational-time version of the discrete backward Kolmogorov equation used later for pricing. 

Unlike the arrival equation, the backward relation conditions on the first step out of the initial state. Hence the probabilities are evaluated at
$(s_0,u_0)$, and the derivatives in the diffusion limit act on the initial
state variable $s_0$, not on the terminal state $s$.

\subsection{Diffusion limit in operational time}
\label{ssec:operational_diffusion_limit}

For the forward equation, to avoid notational ambiguity we temporarily write
\begin{align}
 \rho(s,u):=\rho^{(u)}(s,u\mid s_0,u_0), \text{ and }
  p(s,u):=p_r(s,u). \nonumber 
\end{align}
Define
\begin{equation}
 A(s,u):=p(s,u)\rho(s,u),
\end{equation}
and
\begin{equation}
  B(s,u):=[1-p(s,u)]\rho(s,u).
\end{equation}

Equation~\eqref{eq:forward_operational_discrete} is
\begin{equation}
\label{eq:forward_operational_AB}
 \rho(s,u+\Delta u)=A(s-\Delta S,u)+B(s+\Delta S,u).
\end{equation}
Expanding both sides gives
\begin{align}
\rho(s,u+\Delta u)
&=
\rho
+\Delta u\,\partial_u\rho
+\frac{\Delta u^2}{2}\partial_{uu}\rho
+o(\Delta u^2).
\\
A(s-\Delta S,u)
&=A-\Delta S A_s+\frac{\Delta S^2}{2}A_{ss}+o(\Delta S^2),
\\
B(s+\Delta S,u)
&=B+\Delta S B_s+\frac{\Delta S^2}{2}B_{ss}+o(\Delta S^2).
\end{align}
Since $A+B=\rho$, subtracting $\rho$ and dividing by $\Delta u$ yields
\begin{equation}
\label{eq:operational_prelimit_forward}
 \partial_u \rho
 =\frac{\Delta S}{\Delta u}\partial_s\bigl[(1-2p)\rho\bigr]
 +\frac{\Delta S^2}{2\Delta u}\partial_{ss}\rho
 +o(1),
\end{equation}
This displayed Taylor calculation is the homogeneous nearest-neighbour case with fixed spatial mesh, so the second conditional moment coefficient is constant in \(s\).  The divergence-form forward equation with
state-dependent \(D_u(s,u)\) follows either from a state-dependent local kernel or from a variable activity/mesh scaling, and is stated below in generator-adjoint form. This holds provided the usual weak-asymmetry conditions are satisfied.

Let $\Delta S_u:=S(u+\Delta u)-S(u)$ and assume the operational diffusion scaling
\begin{equation}
\label{eq:operational_diffusion_scaling}
 D_u(s,u):=\lim_{\Delta u\to0}\frac{1}{2\Delta u}
 \mathbb{E}[(\Delta S_u)^2\mid S(u)=s].
\end{equation}
In the homogeneous nearest-neighbour case, this reduces to
$D_u=\lim_{\Delta S,\Delta u\to0}\Delta S^2/(2\Delta u)$.
Assume also the weak-bias scaling
\begin{equation}
\label{eq:operational_drift_scaling}
 \mu_u(s,u):=\lim_{\Delta S,\Delta u\to0}\frac{\Delta S}{\Delta u}\bigl(2p_r(s,u)-1\bigr).
\end{equation}
Then reverting to the prior notation
\begin{align}
\label{eq:forward_operational_limit}
 \partial_u \rho^{(u)} &(s,u\mid s_0,u_0)
 ={}-\partial_s\bigl[\mu_u(s,u)\rho^{(u)} (s,u\mid s_0,u_0)\bigr]
 \nonumber \\
 &+\partial_{ss}\bigl[D_u(s,u)\rho^{(u)} (s,u\mid s_0,u_0)\bigr].
\end{align}
If $D_u$ is constant, this reduces to $-\partial_s(\mu_u\rho^{(u)})+D_u\partial_{ss}\rho^{(u)}$.

The corresponding backward operational-time equation is
\begin{align}
\label{eq:backward_operational_limit}
 -\partial_{u_0}\rho^{(u)}&(s,u\mid s_0,u_0)
={}\mu_u(s_0,u_0)\partial_{s_0}\rho^{(u)} (s,u\mid s_0,u_0) \nonumber \\
&+D_u(s_0,u_0)\partial_{s_0s_0}\rho^{(u)}(s,u\mid s_0,u_0).
\end{align}
For state-dependent $D_u$, the backward generator contains $D_u(s_0,u_0)\partial_{s_0s_0}$, while the forward generator contains the adjoint divergence-form operator $\partial_{ss}(D_u\cdot)$. \ref{app:generator_convergence} separates this formal calculation from process-level convergence.

\subsection{Calendar time as a clock map}
\label{ssec:calendar_clock_map_stage1}

Calendar time $t$ is introduced by a nondecreasing clock
\begin{equation}
\label{eq:clock_map_U}
 u=U(t),
\end{equation}
or equivalently by its inverse $t=T(u)$ when such an inverse exists. The operational-time Markov process is therefore
\[
 S(u),
\]
while the calendar-time observed log-price process is
\begin{equation}
\label{eq:calendar_observed_process}
 S_t:=S(U(t)).
\end{equation}
The corresponding calendar-time price process is
\begin{equation}
\label{eq:operational_calendar_price_processes}
 X_t:=\exp\{S_t\}=\exp\{S(U(t))\}.
\end{equation}
Thus $S(u)$ is indexed by operational time, while $S_t$ and $X_t$ are indexed by calendar time.
Three cases are important.

\begin{enumerate}[label=(\roman*)]
\item \textbf{Deterministic activity clock.} If $U(t)=\alpha t$, then operational and calendar time differ only by scale. For a time-inhomogeneous operational generator, the calendar-time generator is $\alpha\mathcal{L}_{\alpha t}$.
\item \textbf{Time-dependent deterministic activity.} If $\mathrm{d}U(t)=\alpha(t)\mathrm{d}t$, then coefficients are activity-rescaled. For example, an operational variance rate $2D_u$ becomes a calendar variance rate $2\alpha(t)D_u$.
\item \textbf{Random clock.} If $U_t$ is a stochastic clock, then $S_t:=S(U_t)$ is a time-changed process. When the operational dynamics form a time-homogeneous Markov semigroup, a direct independent subordinator gives the classical Bochner-subordinated construction; a time-inhomogeneous model requires an augmented-state or evolution-family formulation. More general random clocks may fail to produce a Markov process in calendar time even when $S(u)$ is Markov in operational time \cite{Bochner1949}.
\end{enumerate}

\begin{proposition}[Operational-time lattice-to-diffusion template]
\label{prop:operational_lattice_template}
Let $S_n$ be the reaction-boundary log-price on the homogeneous nearest-neighbour branch with increments $\pm\Delta S$ in event time. Uppercase $S$ is reserved for the log-price process, while lowercase $s$ denotes a generic log-price state and $x=e^s$ the corresponding positive price state. Let $u_n=n\Delta u$ and let the right-jump probability be $p_r(s,u)$. Suppose the discrete backward equation~\eqref{eq:backward_operational_discrete} holds, the operational variance scaling~\eqref{eq:operational_diffusion_scaling} is finite, and the weak-bias scaling~\eqref{eq:operational_drift_scaling} is finite. The Taylor calculation identifies~\eqref{eq:backward_operational_limit} as the candidate backward equation, with constant $D_u$ for this homogeneous branch. The same conclusion with state-dependent $D_u(s,u)$ holds for a local triangular-array kernel satisfying the moment assumptions of \ref{app:generator_convergence}. If Theorem~\ref{thm:app_lattice_diffusion_convergence} applies, the interpolated lattice converges weakly to the diffusion generated by $\mathcal L_u=\mu_u\partial_s+D_u\partial_{ss}$. Where a sufficiently regular transition density exists, its backward equation is~\eqref{eq:backward_operational_limit}. Calendar-time pricing equations follow only after specifying a clock map $u=U(t)$ and a pricing measure under which the discounted price process is a martingale in calendar time.
\end{proposition}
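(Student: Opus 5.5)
The proof mirrors the calendar-time argument of Proposition~\ref{prop:lattice_to_bsm_rigorous}, with $t$ replaced by $u$ and the clock map deferred to the end. There are three separable claims: (i) the formal Taylor identification of~\eqref{eq:backward_operational_limit}; (ii) process-level convergence, which is imported from Theorem~\ref{thm:app_lattice_diffusion_convergence}; and (iii) the closing statement that calendar-time pricing needs an additional clock map and measure. Claim (iii) is definitional: the operational generator carries no information about $U$ or $Q$, so nothing in (i)--(ii) produces a calendar-time martingale condition.

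For (i), I fix the terminal point $(s,u)$ and write $f(s_0,u_0):=\rho^{(u)}(s,u\mid s_0,u_0)$, assumed $C^{2,1}$ near $(s_0,u_0)$. I expand both $f(s_0\pm\Delta S,u_0+\Delta u)$ about $(s_0,u_0+\Delta u)$ to second order in $\Delta S$. Inserting these into~\eqref{eq:backward_operational_discrete} and using $p_r+(1-p_r)=1$ gives
\[
f(s_0,u_0)-f(s_0,u_0+\Delta u)=\Delta S\,(2p_r-1)\,\partial_{s_0}f+\tfrac{\Delta S^2}{2}\partial_{s_0s_0}f+o(\Delta S^2),
\]
with the derivatives evaluated at $(s_0,u_0+\Delta u)$. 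I then divide by $\Delta u$ and apply the scalings~\eqref{eq:operational_diffusion_scaling} and~\eqref{eq:operational_drift_scaling}. The coefficient ratios converge to $\mu_u(s_0,u_0)$ and $D_u$, and continuity of the derivatives moves the evaluation point back to $u_0$. The left side tends to $-\partial_{u_0}f$, which yields~\eqref{eq:backward_operational_limit} with constant $D_u=\lim\Delta S^2/(2\Delta u)$. The remainder is $o(\Delta S^2)/\Delta u\to0$ because $\Delta S^2/\Delta u$ stays bounded. For the state-dependent case I repeat the computation with a general local kernel $K_{\Delta u}(s_0,\mathrm dy;u_0)$. The first and second conditional moments then produce $\mu_u(s_0,u_0)$ and $D_u(s_0,u_0)$, and the third absolute moment is $o(\Delta u)$ by the large-increment control in \ref{app:generator_convergence}. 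Since the probabilities are evaluated at the initial state, no derivative lands on the coefficients, which is exactly why the backward form is non-divergence.

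For (ii), I check the hypotheses of Theorem~\ref{thm:app_lattice_diffusion_convergence}, namely local uniform moment convergence, large-increment control, compact containment, and well-posedness, and then quote its conclusion. On the nearest-neighbour branch the increments are bounded by $\Delta S\to0$, so the large-increment condition holds automatically, and moment convergence is exactly the pair of scaling assumptions. Compact containment follows from local boundedness of $\mu_u$ and $D_u$, for example through a Lyapunov function $1+s^2$. Well-posedness is assumed, and holds for instance when the coefficients are Lipschitz. The density statement then follows only under the extra assumption that a regular density exists. In that case $f$ is $u_0\mapsto\mathbb E[\varphi(S(u))\mid S(u_0)=s_0]$ in kernel form, and Kolmogorov's backward equation for the limiting Markov diffusion gives~\eqref{eq:backward_operational_limit}.

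The main obstacle is making the passage from the discrete recursion to the PDE rigorous, because the Taylor step presupposes smoothness of the lattice mass in $(s_0,u_0)$, which a lattice function does not have. I would avoid differentiating lattice densities directly. Instead I would apply the expansion to a smooth test function $\varphi$, which gives generator convergence $\Delta u^{-1}(K_{\Delta u}-I)\varphi\to\mathcal L_u\varphi$. The density-level PDE would then be obtained only after the limit, from the limiting semigroup, which is how the statement is hedged ("where a sufficiently regular transition density exists").
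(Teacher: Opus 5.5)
Your architecture matches the paper's. The formal Taylor step on the backward recursion corresponds to the forward analogue displayed in Section~\ref{ssec:operational_diffusion_limit}. Your test-function reformulation is Lemma~\ref{lem:app_generator_convergence}, and you quote the weak-convergence and density clauses conditionally from Theorem~\ref{thm:app_lattice_diffusion_convergence}. Your nearest-neighbour computation is correct. Two supporting claims are false as stated, though.

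First, the large-increment condition~\eqref{eq:app_lindeberg} does not make the third absolute moment $o(\Delta u)$. It controls only $\int_{|z|>\varepsilon}z^2K_m$. For a counterexample, mix a nearest-neighbour kernel with weight $\Delta u_m^2$ on a symmetric law whose tail density is proportional to $|z|^{-7/2}$. This preserves Assumption~\ref{ass:app_local_diffusion}(c)--(d) but gives $\int|z|^3K_m=\infty$ for every $m$. The repair is the truncation in Lemma~\ref{lem:app_generator_convergence}:
\begin{itemize}
\item On $|z|\le\varepsilon$, bound the cubic remainder by $c_f\varepsilon z^2$, so its scaled contribution has $\limsup$ at most $2c_f\varepsilon D_u$.
\item On $|z|>\varepsilon$, use boundedness of the test function, the bound $|z|\le z^2/\varepsilon$ to pass from truncated to full first moments, and~\eqref{eq:app_lindeberg}.
\item Then let $\varepsilon\downarrow0$.
\end{itemize}
This is a further reason to run the state-dependent expansion on $C_c^3$ test functions rather than on the density.

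Second, compact containment does not follow from local boundedness of $\mu_u$ and $D_u$. Local boundedness gives no control of $\mathcal L_u(1+s^2)=2s\mu_u+2D_u$ at large $|s|$. It also does not rule out explosion: take $\mu_u(s)=s^2$ with constant $D_u$. A Lyapunov argument with $1+s^2$ needs the linear-growth part of Assumption~\ref{ass:app_local_diffusion}(b), which gives $\mathcal L_u(1+s^2)\le C(1+s^2)$. You would then apply it to the chain stopped on leaving $[-R,R]$, where locally uniform moment convergence is available.

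You do not actually need this step. The proposition asserts convergence only ``if Theorem~\ref{thm:app_lattice_diffusion_convergence} applies'', and the paper simply takes compact containment as hypothesis~(e) of Assumption~\ref{ass:app_local_diffusion}. Similarly, the proposition's scalings are pointwise limits, whereas Assumption~\ref{ass:app_local_diffusion}(c) requires locally uniform convergence. So they are not exactly the moment hypothesis, as you claim. With these corrections, your argument supports the statement at the same level of rigour as the paper.
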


\begin{remark}[Direct $t$-lattice derivation]
The derivation is unchanged algebraically from the usual lattice-to-diffusion calculation, but its interpretation changes. The Markov kernel and the Chapman--Kolmogorov equations are primitive in $u$, not in $t$. Treating $t$ as primitive implicitly assumes a deterministic identity clock $U(t)=t$. That assumption is precisely what must be relaxed when event time, operational time, and calendar time are not unique.
\end{remark}

\section{Operational-Time Markov Lattice}
\label{sec:generalised_operational_lattice}

\subsection{General one-step kernel in operational time}
\label{ssec:general_kernel_operational}

The nearest-neighbour reaction-boundary model is the minimal local case. The general operational-time Markov lattice is specified by a one-step transition kernel
\begin{equation}
\label{eq:general_kernel_operational}
 K_{\Delta u}(s,\mathrm{d}y;u)
 =\Prob\{S(u+\Delta u)\in\mathrm{d}y\mid S(u)=s\}.
\end{equation}
For a lattice with increments $z\in\mathcal{Z}_{\Delta S}\subset\mathbb{R}$,
\begin{equation}
\label{eq:lattice_jump_kernel}
 K_{\Delta u}(s,\mathrm{d}y;u)
 =\sum_{z\in\mathcal{Z}_{\Delta S}}p_z(s,u)\delta_{s+z}(\mathrm{d}y),
\end{equation}
and
 \begin{equation}
 \sum_z p_z(s,u)=1.
\end{equation}
The nearest-neighbour model corresponds to 
\begin{equation}
    \mathcal{Z}_{\Delta S}=\{-\Delta S,+\Delta S\}.
\end{equation}

For any test function $\varphi$, the one-step backward operator is
\begin{equation}
\label{eq:discrete_backward_operator}
 (P_{\Delta u}\varphi)(s,u)
 :=\int \varphi(y,u+\Delta u)K_{\Delta u}(s,\mathrm{d}y;u).
\end{equation}
The discrete dynamic-programming form for a claim value in operational time is therefore
\begin{equation}
\label{eq:discrete_value_operational_general}
 V(s,u)=e^{-r\Delta t}\int V(y,u+\Delta u)K^Q_{\Delta u}(s,\mathrm{d}y;u),
\end{equation}
where $K^Q$ denotes the pricing-measure kernel and $\Delta t$ is the associated calendar-time increment. This already shows that discounting is a calendar-time operation, while the primitive transition is operational-time. The increment $\Delta t$ is therefore not an additional primitive of the operational lattice. It must be induced by a specified clock rule. For a deterministic absolutely continuous clock $u=U(t)$ with activity rate $\alpha(t)=\mathrm{d}U(t)/\mathrm{d}t$, one has locally $\Delta u\simeq \alpha(t)\Delta t$, so that discounting and coefficient rescaling are both calendar-time operations applied after the operational transition has been specified. If the clock is random or latent, then the pricing measure must also specify the compensation for clock risk; otherwise the backward recursion in operational time does not by itself determine a unique calendar-time price.
The physical/event kernel need not itself be the pricing kernel. Passing from the event-time or operational-time kernel generated by order-flow dynamics to $K^Q$ requires a choice of equivalent pricing measure or another pricing criterion. In the complete local-diffusion case this choice is represented by the drift restriction; outside that case the choice is generally not unique unless the additional risks are traded or otherwise fixed.

\subsection{Operational generator}
\label{ssec:operational_generator}

If the limit exists, the operational-time generator is
\begin{equation}
\label{eq:operational_generator_def}
 \mathcal{L}_u\varphi(s)
 :=\lim_{\Delta u\to0}\frac{1}{\Delta u}
 \left[\int \varphi(y)K_{\Delta u}(s,\mathrm{d}y;u)-\varphi(s)\right].
\end{equation}
For a local finite-variance limit,
\begin{equation}
\label{eq:local_operational_generator}
 \mathcal{L}_u\varphi(s)
 =\mu_u(s,u)\partial_s\varphi(s)+D_u(s,u)\partial_{ss}\varphi(s).
\end{equation}
For a nonlocal jump limit,
\begin{align}
\label{eq:nonlocal_operational_generator}
\mathcal{L}_u\varphi(s)
={}&b_u(s,u)\varphi_s(s)+D_u(s,u)\varphi_{ss}(s) \nonumber \\
&+\int_{\mathbb{R}}\left[\varphi(s+z)-\varphi(s)-z\varphi_s(s)\mathbf{1}_{|z|<1}\right]\nu_u(s,u,\mathrm{d}z).
\end{align}
Here $\nu_u(s,u,\mathrm{d}z)$ is a state- and operational-time dependent L\'evy kernel satisfying the usual local integrability condition
\begin{equation}
\label{eq:levy_kernel_integrability}
 \int_{\mathbb{R}}(1\wedge z^2)\,\nu_u(s,u,\mathrm{d}z)<\infty,
\end{equation}
for each fixed $(s,u)$. This condition is what makes the compensated small-jump term in \eqref{eq:nonlocal_operational_generator} well defined on the usual smooth test-function domain.

Thus BSM, local volatility, jump-diffusion, and L\'evy-type limits differ by the limiting form of $\mathcal{L}_u$.

\subsection{Clock projection to calendar time}
\label{ssec:clock_projection}

Let $U(t)$ be nondecreasing and define $S_t:=S(U(t))$. If $U(t)$ is deterministic and absolutely continuous with
\begin{equation}
\label{eq:activity_rate}
 \frac{\mathrm{d}U(t)}{\mathrm{d}t}=\alpha(t)\ge0,
\end{equation}
then for activity rate $\alpha(t)$ the calendar-time generator is
\begin{equation}
\label{eq:calendar_generator_deterministic_clock}
 \mathcal{A}_t=\alpha(t)\mathcal{L}_{U(t)}.
\end{equation}

\begin{remark}[Boundary of the deterministic clock formula]
Equation~\eqref{eq:calendar_generator_deterministic_clock} applies to a deterministic time-only clock. A state-dependent rate requires a time change built from a path-dependent additive functional and is not represented by simply replacing $\alpha(t)$ with $\alpha(s,t)$ while retaining $U(t)$. Endogenous, coupled, or stochastic clocks require a separate construction; see \ref{app:deterministic_clock_projection}.
\end{remark}
\ref{app:deterministic_clock_projection} gives the corresponding deterministic time-change statement at the martingale-problem and generator levels, including the cases where the activity rate can vanish.

Hence a local operational diffusion becomes
\begin{equation}
\label{eq:calendar_local_diffusion_log}
 \mathrm{d}S_t=\alpha(t)\mu_u(S_t,U(t))\mathrm{d}t
 +\sqrt{2\alpha(t)D_u(S_t,U(t))}\,\mathrm{d}W_t.
\end{equation}
In price variables, $X_t=e^{S_t}$; the calendar-time drift and variance of $X$ acquire the usual geometric transformation.

If $U_t$ is a stochastic clock independent of the operational process, or if an explicit separation or conditional-law construction gives the same factorisation, the transition law is a mixture over operational times:
\begin{equation}
\label{eq:subordinated_density_mixture}
 \rho^{(t)}(s,t\mid s_0,0)
 =\int_0^\infty \rho^{(u)} (s,v\mid s_0,0)h(v,t)\,\mathrm{d}v,
\end{equation}
where $h(v,t)$ is the density of $U_t$ evaluated at operational time $v$. Endogenous or correlated clocks generally require the joint conditional law and need not admit this factorisation. Under the stated restriction the mixture is a bridge from the operational Markov lattice to standard subordinated or CTRW constructions.

\begin{remark}
For a time-homogeneous operational Markov semigroup, a direct independent subordinator gives a Markovian subordinated semigroup. A time-inhomogeneous operational model requires an augmented-state or evolution-family formulation before making the same claim. By contrast, inverse subordinators and renewal clocks, which are typical in CTRW limits with heavy-tailed waiting times, generally produce non-Markovian calendar-time dynamics with memory kernels.
\end{remark}

\subsection{Pricing implication of non-unique time}
\label{ssec:pricing_nonunique_time}

With the projected calendar-time processes denoted by $S_t$ and $X_t=e^{S_t}$, a conventional European payoff at calendar maturity $T$ is $g(X_T)=g(e^{S_T})$. For a deterministic time-only clock, or for a separately specified construction in which the projected process is Markov in the stated variables, its price may be written
\begin{equation}
\label{eq:clocked_pricing_expectation}
 V(x,t)=\mathbb{E}^Q\left[B(t,T)g(X_T)\mid X_t=x\right],
\end{equation}
where $B(t,T)$ is the calendar-time discount factor. For inverse, renewal, endogenous, or correlated clocks, the current price alone need not determine the value. The conditioning state may have to include clock age, current operational time, or another joint price--clock variable; otherwise valuation is a non-Markovian functional. If such clock risk is unspanned, no-arbitrage need not identify a unique pricing measure for it even when the short rate is deterministic.

If the calendar short rate is deterministic and the projected Markov representation applies, then
\begin{equation}
V(x,t)=B(t,T)\,\mathbb E^Q[ g(X_T)\mid X_t=x].
\end{equation}
With stochastic rates or stochastic discounting conventions, the discount factor
must remain inside the conditional expectation. 
The operational-time lattice therefore naturally explains why incompleteness can arise before one adds stochastic volatility or jumps, when only the stock and cash account are traded and clock risk is not itself spanned. \ref{app:unspanned_clock_risk} gives a finite-state construction with a family of equivalent martingale measures that agree on traded stock prices but assign different values to a clock-sensitive claim.

\section{Hierarchy from the Operational Lattice}
\label{sec:operational_hierarchy}

\begin{table*}[p]
\centering
\caption{Hierarchy from operational lattices to calendar-time pricing equations.}
\label{tab:hierarchy}
\small
\renewcommand{\arraystretch}{1.16}
\setlength{\tabcolsep}{3.2pt}
\begin{tabularx}{\textwidth}{
  >{\raggedright\arraybackslash}p{0.165\textwidth}
  >{\raggedright\arraybackslash}p{0.175\textwidth}
  >{\raggedright\arraybackslash}p{0.195\textwidth}
  >{\raggedright\arraybackslash}p{0.205\textwidth}
  >{\raggedright\arraybackslash}X}
\toprule
\textbf{Kernel or clock branch} &
\textbf{Scaling or clock condition} &
\textbf{Calendar-time model} &
\textbf{Pricing and market implication} &
\textbf{Location and Literature} \\
\midrule
Nearest-neighbour log-price lattice &
Local finite variance, weak bias; identity or deterministic clock &
Geometric Brownian motion when projected variance is constant &
Second-order backward PDE; BSM under the risk-neutral drift restriction; complete only under the standard continuous-trading assumptions &
Sections~\ref{ssec:forward_operational_discrete}--\ref{ssec:operational_diffusion_limit}, \ref{ssec:nearest_neighbour_branch}; Proposition~\ref{prop:lattice_to_bsm_rigorous}; \ref{app:generator_convergence}; \citep{CoxRossRubinstein1979,BlackScholes1973,Merton1973}. \\
\addlinespace[4pt]
State/time-dependent local moments &
Local diffusion scaling with $D_u=D_u(s,u)$ and deterministic activity $\alpha(t)$ &
Local-volatility diffusion with $\sigma_{\rm loc}^2(x,t)=2\alpha(t)D_u^Q(\ln x,U(t))$ &
Adjoint Fokker--Planck and backward pricing PDEs; one-factor completeness requires deterministic coefficients and spanning &
Sections~\ref{ssec:state_dependent_localvol_branch}, \ref{ssec:local_moments_boundary}--\ref{ssec:forward_density_dupire_bridge}; \ref{app:localvolfromlinearlob} and~\ref{app:clock_variance_equivalence}; \citep{Dupire1994,DermanKani1994,Gyongy1986}. \\
\addlinespace[4pt]
Multinomial lattice with vanishing increments &
Only the first two scaled moments survive &
Local diffusion; higher finite-mesh cumulants remain approximation corrections &
Second-order generator in the limit; finite-mesh prices and path-sensitive claims may still differ &
Sections~\ref{ssec:general_kernel_operational}--\ref{ssec:operational_generator}; \ref{app:generator_convergence} and~\ref{app:finite_mesh_non_equivalence}; \citep{EthierKurtz1986,StroockVaradhan1979}. \\
\addlinespace[4pt]
Lattice with rare finite jumps &
Jump intensity and finite jump sizes survive &
Jump-diffusion &
PIDE under a selected $Q$; jump risk is generally not spanned by stock and bond alone &
Section~\ref{ssec:jump_levy_branch_operational}; \citep{Merton1976,ContTankov2004}. \\
\addlinespace[4pt]
General nonlocal spatial kernel &
Heavy-tailed or infinite-activity spatial limit &
Exponential L\'evy or stable-type process &
Nonlocal PIDE or fractional-space operator; a risk-neutral kernel $\nu^Q$ and truncation convention must be specified &
Sections~\ref{ssec:operational_generator} and~\ref{ssec:jump_levy_branch_operational}; \citep{ContTankov2004}. \\
\addlinespace[4pt]
Independent direct subordinator &
Independent increasing clock and a time-homogeneous operational semigroup &
Subordinated Markov semigroup &
Calendar dynamics can remain Markovian in the subordinated sense; completeness depends on which clock risks are traded or fixed &
Sections~\ref{ssec:clock_projection}--\ref{ssec:pricing_nonunique_time}; \ref{app:deterministic_clock_projection}; \citep{Bochner1949,Clark1973,CarrWu2004}. \\
\addlinespace[4pt]
Heavy-tailed waiting times or inverse/renewal clock &
Anomalous time scaling with persistent waiting-time effects &
Time-changed process with memory; fractional or memory-kernel evolution &
The classical BSM replication argument does not apply unchanged; non-uniqueness arises when claim-relevant clock or renewal risk is unspanned &
Sections~\ref{ssec:clock_projection}--\ref{ssec:pricing_nonunique_time} and~\ref{ssec:random_clock_fractional_branch}; \ref{app:unspanned_clock_risk}; \citep{MontrollWeiss1965,MetzlerKlafter2000,ScalasGorenfloMainardi2000,MeerschaertScheffler2004}. \\
\bottomrule
\end{tabularx}

\vspace{5pt}
{\footnotesize\itshape Boundary of the scalar hierarchy. Rough volatility requires an enlarged stochastic variance state or forward-variance curve; it is not another one-dimensional nearest-neighbour limit. See Sections~\ref{sec:operational_hierarchy} and~\ref{ssec:calendar_localvol}; \citep{BayerFrizGatheral2016,GatheralJaissonRosenbaum2018}.}
\end{table*}

The hierarchy is given in Table~\ref{tab:hierarchy} and visualised in Figure~\ref{fig:dtrw_hierarchy_extended}. The deterministic local-diffusion branch is derived in detail in this paper. The jump, L\'evy, fractional, rough-volatility, and more general stochastic-clock branches are classifications or extensions requiring their own convergence and pricing constructions. The hierarchy should therefore be read at the level of limiting generators.  When only the first two local cumulants survive, the limit is a local diffusion with a second-order generator. When finite jumps or heavy-tailed spatial increments survive, the limiting generator is nonlocal and the pricing equation is a PIDE or fractional-space equation.  Finite-mesh higher cumulants may produce useful asymptotic corrections, but they should not
be confused with generic finite-order Markov generators beyond second order.  Random clocks also split into cases: direct independent subordination of a time-homogeneous operational semigroup preserves a subordinated Markov semigroup, while inverse or renewal clocks typically produce memory kernels and non-Markovian calendar-time dynamics.  In all non-diffusive branches the displayed pricing equation is conditional on the choice of an admissible pricing measure.  

Rough volatility can be placed in this hierarchy only after enlarging the primitive state. Volterra-type volatility processes can, under suitable structural conditions, be represented through infinite-dimensional Markovian lifts, where generalized Feller semigroup and Kolmogorov-equation methods become available \cite{CuchieroTeichmann2020}. In the scalar lattice used above, the local second conditional moment collapses in the diffusion limit to $D_u(s,u)$, and hence to a deterministic local-volatility coefficient after a deterministic clock projection.

A rough-volatility model instead treats this variance coefficient, or a forward-variance curve generated by it, as an additional stochastic object with low regularity. Thus rough volatility is not a new one-dimensional local diffusion limit of the nearest-neighbour boundary alone; it is an enlarged-state or memory limit in which the operational variance field has its own dynamics \citep{GatheralJaissonRosenbaum2018,BayerFrizGatheral2016}. This keeps rough volatility as a feasible extension of the framework, but not as part of the scalar lattice derivation developed here. 


\subsection{Three primitive modelling choices}
\label{ssec:three_primitives}

The hierarchy is generated by three primitive choices:
\begin{enumerate}[label=(\roman*)]
\item the operational-time one-step kernel $K_{\Delta u}$;
\item the limiting scaling of its conditional moments or cumulants as $\Delta u\to0$;
\item the clock map $u=U(t)$ used to express prices and discounting in calendar time.
\end{enumerate}
The BSM model is obtained only for a local finite-variance kernel, weak asymmetry, a deterministic clock, and a risk-neutral calendar drift. For pricing, there is a fourth choice: the admissible pricing measure $Q$ or pricing criterion used to transform the operational/event kernel into a pricing kernel.  In the complete local-diffusion case this is encoded by the drift restriction; it is otherwise generally an additional model choice.

\subsection{Nearest-neighbour finite-variance branch}
\label{ssec:nearest_neighbour_branch}

For the nearest-neighbour kernel,
\[
 S(u+\Delta u)-S(u)=\pm\Delta S,
\]
with weak asymmetry, the operational generator is
\begin{equation}
\label{eq:nn_operational_generator}
 (\mathcal{L}_u\varphi)(s)=\mu_u(s,u)\varphi_s(s)+D_u(s,u)\varphi_{ss}(s).
\end{equation}
Under a deterministic clock with activity rate $\alpha(t)$, the calendar log-price generator is
\begin{equation}
\label{eq:nn_calendar_generator}
 (\mathcal{A}_t\varphi)(s)=\alpha(t)\mu_u(s,U(t))\varphi_s(s)+\alpha(t)D_u(s,U(t))\varphi_{ss}(s).
\end{equation}
In price variables $x=e^s$, the backward generator acting on smooth functions $V(x,t)$ is
\begin{align}
\label{eq:price_generator_activity}
 \mathcal{A}_t^X V
 ={}&\alpha(t)D_u(\ln x,U(t))x^2V_{xx} \\
 &+\alpha(t)\bigl[D_u(\ln x,U(t))+\mu_u(\ln x,U(t))\bigr]xV_x. \nonumber
\end{align}
The risk-neutral drift condition is
\begin{equation}
    \alpha(t)[D_u^Q(\ln x,U(t))+\mu_u^Q(\ln x,U(t))]=r(t)-q(t)
\end{equation}
Writing $\alpha(t)D_u^Q(\ln x,U(t))=\frac12\sigma_{\mathrm{loc}}^2(x,t)$,
the pricing PDE is then
\begin{equation}
\label{eq:bsm_local_from_activity}
V_t+(r-q)xV_x+\frac12\sigma_{\mathrm{loc}}^2(x,t)x^2V_{xx}-rV=0.
\end{equation}
For $q=0$, this reduces to the zero-dividend BS/local-volatility form. The constant-volatility BSM equation is the special case $\sigma_{\mathrm{loc}}(x,t)\equiv\sigma$.

\subsection{State-dependent local-volatility branch}
\label{ssec:state_dependent_localvol_branch}

If the second operational moment coefficient is state and operational-time dependent,
\begin{equation}
\label{eq:state_dep_second_cumulant}
 D_u(s,u)=\lim_{\Delta u\to0}\frac{1}{2\Delta u}\mathbb{E}\left[(\Delta S_u)^2\mid S(u)=s\right],
\end{equation}
then the corresponding relative local variance in price coordinates is
\begin{equation}
\label{eq:log_local_variance_calendar}
 \sigma_{\mathrm{loc}}^2(x,t)=2\alpha(t)D_u^Q(\ln x,U(t)).
\end{equation}
Equivalently, this is the calendar-time log-variance rate evaluated at $s=\ln x$, while the absolute diffusion coefficient of $X_t$ is $x\sigma_{\mathrm{loc}}(x,t)$.

\subsection{Jump and L\'evy branches}
\label{ssec:jump_levy_branch_operational}

If the operational kernel has rare finite jumps whose intensity survives the limit, then the generator contains a nonlocal term. Under a deterministic clock:
\[
 \nu_t^Q(x,\mathrm dz)
 =\alpha(t)\nu_u^Q(\ln x,U(t),\mathrm dz).
\]
The calendar-time nonlocal term is
\begin{equation}
\label{eq:operational_jump_generator_hierarchy}
 \int_{\mathbb{R}}\left[V(xe^z,t)-V(x,t)-zxV_x(x,t)\mathbf{1}_{|z|<1}\right]\nu^Q_t(x,\mathrm{d}z).
\end{equation}
With the log-price truncation used in \eqref{eq:nonlocal_operational_generator}, the compensation term in price coordinates is $zxV_x\mathbf{1}_{\{|z|<1\}}$. A convention using the price jump $x(e^z-1)$ can also be used, but then the finite-variation drift must be adjusted accordingly. The resulting pricing equation is a PIDE. For each $(x,t)$, assume
$\int_{\mathbb R}(1\wedge z^2)\nu_t^Q(x,\mathrm dz)<\infty$,
so that the compensated nonlocal operator is well defined. The nonlocal operator is not an afterthought: it is the continuum trace of the non-nearest-neighbour operational lattice kernel.

\subsection{Random-clock and fractional branches}
\label{ssec:random_clock_fractional_branch}

If the event waiting times have heavy tails, an inverse clock can generate memory in calendar time. Operational-time Markovianity does not imply calendar-time Markovianity. Under independence, or an explicitly specified separation or conditional-law construction, the density may be represented by a subordination integral of the form
\begin{equation}
\label{eq:hierarchy_subordination_integral}
 \rho^{(t)}(x,t\mid x_0,0)=\int_0^\infty \rho^{(u)}(x,v\mid x_0,0)h(v,t)\,\mathrm{d}v.
\end{equation}
In standard CTRW scaling limits with power-law waiting times, such a representation leads to time-fractional or memory-kernel evolution equations \cite{MagdziarzSchilling2015}. Endogenous or correlated clocks require the joint conditional law and do not generally reduce to the displayed factorisation.
Such models are generally incomplete relative to trading only in the stock and cash account, unless the clock risk is traded or otherwise fixed by an additional pricing criterion. 

\subsection{Boundary to local-volatility}
\label{ssec:boundary_to_lv_conclusion}

The reaction-boundary lattice gives a microscopic interpretation of local volatility:
\begin{equation}
\label{eq:localvol_boundary_summary}
 \sigma_{\mathrm{loc}}^2(x,t)
 =\alpha(t)
 \lim_{\Delta u\to0}\frac{1}{\Delta u}
 \mathbb{E}[(\Delta S_u)^2\mid S(u)=\ln x].
\end{equation}
BSM corresponds to the special case in which this conditional boundary variance is constant and the calendar clock has constant activity. Smile and skew correspond, in this deterministic-clock interpretation, to state-dependent reaction-boundary variance, time-dependent activity, or both.

\section{Local Volatility}
\label{sec:localvol_from_reaction_boundary}

\subsection{Local moments of the boundary motion}
\label{ssec:local_moments_boundary}

Let the operational-time boundary increment be
\[
 \Delta S_u:=S(u+\Delta u)-S(u).
\]
For a general operational kernel $K_{\Delta u}$, define the first local operational drift and the primitive diffusion coefficient by
\begin{align}
 \mu_u(s,u)&:=\lim_{\Delta u\to0}\frac{1}{\Delta u}\mathbb{E}[\Delta S_u\mid S(u)=s],
\label{eq:boundary_first_cumulant}
\\
 D_u(s,u)&:=\lim_{\Delta u\to0}\frac{1}{2\Delta u}\mathbb{E}[(\Delta S_u)^2\mid S(u)=s].
\label{eq:boundary_second_cumulant}
\end{align}
For the nearest-neighbour model,
\begin{align}
\label{eq:nn_boundary_cumulants}
 \mu_u(s,u)&=\lim\frac{\Delta S}{\Delta u}(2p_r(s,u)-1), \nonumber \\
 D_u(s,u)&=\lim\frac{\Delta S^2}{2\Delta u}.
\end{align}
Equivalently, if one writes the second conditional moment rate as $a_u(s,u)$, then $a_u(s,u)=2D_u(s,u)$. Superscript $Q$ denotes coefficients under the selected pricing measure; coefficients without a superscript are generic until a measure is specified, while physical coefficients are written with superscript $P$ when the distinction is needed. Within an equivalent change of measure for this local diffusion, quadratic variation is preserved, so $D_u^Q=D_u^P$; this invariance is not asserted for jump kernels, clock laws, or general lattice measure changes. For the homogeneous fixed-spacing lattice $D_u$ is constant; state dependence belongs to the general local kernel.

\subsection{Operational local-volatility diffusion}
\label{ssec:operational_localvol_diffusion}

The operational-time log-price diffusion is
\begin{equation}
\label{eq:operational_log_localvol_sde}
 \mathrm{d}S(u)=\mu_u(S(u),u)\mathrm{d}u+\sqrt{2D_u(S(u),u)}\mathrm{d}W(u).
\end{equation}
The backward operational generator is
\begin{equation}
\label{eq:operational_localvol_generator}
 (\mathcal{L}_u\varphi)(s)=\mu_u(s,u)\varphi_s(s)+D_u(s,u)\varphi_{ss}(s).
\end{equation}
Under the price-coordinate transformation $x=e^s$, the corresponding operational generator is
\begin{equation}
\label{eq:price_localvol_generator_operational}
 \mathcal{L}_u^X V
 =D_u(\ln x,u)x^2V_{xx}
 +\left[\mu_u(\ln x,u)+D_u(\ln x,u)\right]xV_x.
\end{equation}

\subsection{Calendar-time local volatility}
\label{ssec:calendar_localvol}

With a deterministic activity clock $\mathrm{d}U(t)=\alpha(t)\mathrm{d}t$, the calendar-time relative local variance, equivalently the log-variance rate, is
\begin{equation}
\label{eq:calendar_log_localvar}\sigma_{\mathrm{loc}}^2(x,t)=2\alpha(t)D_u^Q(\ln x,U(t)).
\end{equation}
The corresponding risk-neutral drift condition is
\begin{equation}
\alpha(t)\left[
 \mu_u^Q(\ln x,U(t))+D_u^Q(\ln x,U(t))
\right]=r(t)-q(t).
\end{equation}
The term $D_u^Q$ appears because the martingale condition is imposed on the
price level $x=\exp(s)$, not directly on the log-price state $s$.
The price-coordinate risk-neutral local-volatility PDE is therefore
\begin{equation}
\label{eq:calendar_localvol_pde}
 V_t+(r-q)xV_x+\frac12\sigma_{\mathrm{loc}}^2(x,t)x^2V_{xx}-rV=0,
\end{equation}
where $V(x,T)=g(x)$.
The risk-neutral drift condition fixes the calendar drift of $X_t$ to $(r-q)x$; it does not fix the local variance. The local variance remains twice the activity-rescaled operational diffusion coefficient, equivalently the activity-rescaled second conditional moment rate
under the chosen pricing measure.

\begin{remark}
The option surface identifies the calendar-time local variance $2\alpha(t)D_u^Q(\ln x,U(t))$. Without an additional specification of the activity clock $\alpha(t)$, it does not separately identify the operational variance $D_u^Q$ and the clock activity. \ref{app:clock_variance_equivalence} states this as a calendar-generator equivalence result and describes the resulting equivalence class of deterministic operational representations.
\end{remark}

This also marks the difference between local volatility and rough volatility. In the local-volatility branch, \(D_u^Q(\ln x,U(t))\) is a deterministic coefficient 
under the chosen pricing measure once \(x\) and \(t\) are fixed. In a rough-volatility extension, the corresponding variance coefficient would itself be driven by an additional rough stochastic state, often modelled through a fractional-type process with small Hurst exponent. The Dupire surface can still identify a projected risk-neutral local variance, but it does not by itself identify the dynamics of that rough variance factor or the associated volatility-risk premia 
\citep{GatheralJaissonRosenbaum2018,BayerFrizGatheral2016}.

\subsection{Forward density and Dupire interpretation}
\label{ssec:forward_density_dupire_bridge}

Let $q_X(x,t)$ denote the risk-neutral density of $X_t$. The forward equation
corresponding to \eqref{eq:calendar_localvol_pde} is
\begin{align}
\label{eq:forward_localvol_density}
\partial_t q_X(x,t)
= &-\partial_x\{(r(t)-q(t))x q_X(x,t)\} \nonumber \\
&+\frac12\partial_{xx}\{
\sigma_{\mathrm{loc}}^2(x,t)x^2q_X(x,t)
\}.
\end{align}
This is the Fokker--Planck equation for the same risk-neutral generator whose
backward action gives the pricing PDE. The local-volatility surface is the coefficient that propagates the entire risk-neutral density surface. In the present construction, it is linked directly to the operational reaction-boundary variance under the pricing measure by~\eqref{eq:calendar_log_localvar}.

Concretely, for constant $r$ and $q$, European call prices $C(K,T)$ satisfy the usual Dupire inversion
\begin{equation}
\label{eq:dupire_type_relation}
\sigma_{\mathrm{loc}}^2(K,T)
=\frac{2[C_T+(r-q)KC_K+qC]}{K^2C_{KK}},
\end{equation}
provided the call surface is sufficiently smooth and free of static arbitrage. With deterministic time-dependent rates and dividend yields, the corresponding
formula uses the appropriate instantaneous values and discount-factor convention. This Dupire strike--maturity forward equation is distinct from the
Fokker--Planck forward density equation above, although the two are connected through the Breeden--Litzenberger density relation \cite{BreedenLitzenberger1978}. 

\subsection{Put--call parity} \label{ssec:putcalparity}

For European calls and puts with the same strike, calendar maturity, traded terminal underlying, and settlement convention, put--call parity follows from the pathwise payoff identity. If $\Pi_t$ is their common linear pricing functional, then
\begin{equation}
C-P= \Pi_t(X_T)-K\Pi_t(1).
\label{eq:abstract_parity_pricing_functional}
\end{equation}
When $X_T=e^{S_T}$ is the traded ex-dividend stock delivered at calendar time $T$ and $Q$ is an equivalent martingale measure for its discounted cum-dividend price, no-arbitrage requires
\begin{equation}
\Pi_t(1)=B_r(t,T),
\qquad
\Pi_t(X_T)=xB_q(t,T).
\end{equation}
Hence
\begin{equation}
 C-P=xB_q(t,T)-KB_r(t,T),
 \label{eq:standard_put_call_parity}
\end{equation}
under every admissible martingale measure, even when unspanned clock risk makes the market incomplete. Clock risk can make individual nonlinear claims non-unique across admissible measures, but it cannot change the value of their replicable stock--bond difference.

If an operationally sampled quantity $e^{S(\Theta_T)}$, for a contract-specific operational time $\Theta_T$ not equal to the market clock value $U(T)$, is not the traded stock $X_T=e^{S_T}$ delivered at calendar maturity, the claim contains settlement or basis risk. Its pricing functional, clock law, and carry convention must then be specified before comparison with standard equity-option parity. This is a different contract specification, not a failure of put--call parity. Calendar and operational discount-rate conversion for a deterministic clock is recorded in \ref{app:deterministic_clock_projection}.

\begin{remark}[Dupire put--call parity consistency]
Applying the Dupire strike--maturity forward equation to
$D(K,T)=C(K,T)-P(K,T)$ gives the affine solution
$D(K,T)=x_0B_q(0,T)-KB_r(0,T)$. For constant r and q, this reduces to $D(K,T)=x_0e^{-qT}-Ke^{-rT}$. This is consistent with the standard Dupire/Breeden--Litzenberger density
interpretation \citep{Dupire1994,BreedenLitzenberger1978,Gatheral2006}.
\end{remark}

\section{Discussion}

In the usual BSM construction, Brownian motion is postulated in calendar time and It\^o calculus then yields the pricing equation. Here the order is reversed. Sections~\ref{sec:DTRW}--\ref{sec:generalised_operational_lattice} begin with an event-driven reaction-boundary lattice in operational time, where the Markov transition kernel is primitive and the forward/backward equations are first discrete adjoint relations. Only after the diffusion limit in Section~\ref{ssec:operational_diffusion_limit}, and the clock projection in Sections~\ref{ssec:calendar_clock_map_stage1} and~\ref{ssec:clock_projection}, does the familiar continuous-time diffusion language appear. 

BSM appears in Section~\ref{ssec:nearest_neighbour_branch} only as the local, finite-variance, weak-bias, deterministic-clock case with a risk-neutral drift restriction. Local volatility as it appears in Sections~\ref{ssec:state_dependent_localvol_branch} and~\ref{sec:localvol_from_reaction_boundary} is then the clearest bridge to classical approaches to option pricing \eqref{eq:state_dep_second_cumulant}:
\[
 \sigma_{\mathrm{loc}}^2(x,t)
 =\alpha(t)\lim_{\Du\to0}\frac{1}{\Du}
 \E^Q[(S(u+\Du)-S(u))^2\mid S(u)=\ln x].
\]
Here local volatility is interpreted as the activity-rescaled second conditional moment rate, equivalently the leading local quadratic-variation coefficient, of the reaction-boundary motion under the pricing measure \eqref{eq:log_local_variance_calendar}:
 \[
 \sigma_{\mathrm{loc}}^2(x,t)
 = 2\alpha(t)D_u^Q(\ln x,U(t)).
 \]
 A Dupire surface can then be interpreted as an observational inversion of this risk-neutral boundary variance, not as a primitive microscopic input. More precisely, the inversion recovers the projected calendar-time coefficient under the regular local-volatility assumptions; it does not identify a unique factorisation into operational variance and clock activity. \ref{app:clock_variance_equivalence} makes this distinction explicit. 

Rough volatility sits just outside this scalar local-volatility closure. It would arise if the boundary variance coefficient were not only state dependent but itself a rough stochastic field, or if the projected activity/variance process retained fractional-type memory under calendar projection. In that case the Dupire coefficient is a marginal projection of a richer variance dynamics, and the pricing problem again depends on which variance or forward-variance risks are traded, fixed, or priced by an additional criterion \citep{GatheralJaissonRosenbaum2018,BayerFrizGatheral2016}.

Jump, L\'evy, and fractional alternatives arise when the operational kernel or clock is changed, as in Sections~\ref{ssec:jump_levy_branch_operational} and~\ref{ssec:random_clock_fractional_branch}. The limitation is that completeness, uniqueness of the pricing measure, and exact replication are no longer automatic consequences of an assumed Brownian calendar-time filtration. They must be recovered, or shown to fail, after specifying the limiting generator, the clock, and the traded sources of risk.


For completeness: \ref{app:generator_convergence} supplies the process-level lattice limit; \ref{app:deterministic_clock_projection} proves deterministic clock projection; \ref{app:clock_variance_equivalence} identifies deterministic representations with a common calendar generator; \ref{app:finite_mesh_non_equivalence} shows that weaker terminal matching does not fix path-dependent values; and \ref{app:unspanned_clock_risk} isolates pricing non-uniqueness when a claim-relevant clock state is not spanned by traded payoffs.

\section{Conclusion}

The endpoint equations and stochastic-time constructions are known and operational time itself is not new. What is distinctive here is that the BSM/local-volatility lattice derivation is formulated first in $u$, before calendar projection, and can then be connected to deterministic clocks, direct subordinators, and inverse or renewal clocks. The lattice state is not an abstract stock node but the log of a mid-price interface between bid-side and ask-side order-flow fields. The finite-mesh closure~\eqref{eq:prob_imbalance} links jump asymmetry to boundary pressure; Appendices~\ref{app:localvolfromlinearlob}--\ref{app:scale_clock_rough_caveats} give the corresponding variance bridge and its scale qualifications.

This is because the operational time determines the transition kernel $K_{\Delta u}$ but the calendar time the maturity and discounting; our approach can then be summarised as:
\begin{equation}
    K_{\Delta u} \xrightarrow{U(t)} {\mathcal A}_t \xrightarrow{Q} \mbox{pricing equations} 
\end{equation}
Clock choice $U(t)$ can then change the calendar-time model class as it sets the calendar-time log-price generator ${\mathcal A}_t$: deterministic activity gives local PDEs, direct subordination of a time-homogeneous operational semigroup can give nonlocal Markov generators, and inverse or renewal clocks can give memory equations under their respective standard constructions. Incompleteness arises when clock, jump, or renewal risk is economically relevant and unspanned, not merely because a stochastic clock is present.

This implies that market completeness depends on the projection \cite{AngstmannGebbie2026NonUniqueTime}. In the deterministic local-diffusion case, continuous trading may span the single source of risk. In jump, hidden-factor, random-clock, or renewal-risk models, no-arbitrage generally identifies a family of equivalent pricing measures unless the additional risks are traded or otherwise fixed. Random clocks do not automatically imply incompleteness; unspanned clock or renewal risk does. \ref{app:unspanned_clock_risk} makes the distinction explicit: an extra clock label creates non-uniqueness only when traded payoffs fail to distinguish that label.

The four contributions can be stated plainly. The operational kernel is primitive, with process-level convergence requiring more than a Taylor expansion. Kernel, deterministic clock projection, and pricing rule are separate operations. The option surface identifies the projected clock--variance combination rather than a unique operational factorisation. Finally, terminal-law agreement, path-law agreement, and market completeness are different statements: finite-mesh path claims can separate terminally equivalent models, while unspanned clock states generate non-unique pricing measures. These points are usually distributed across separate literature and hidden once the calendar-time model is taken as given {\it a priori}.

\section*{Acknowledgements}
The authors thank members of the research community for comments that improved the presentation.
\section*{Disclosure of interest}
The authors declare no competing interests.

\section*{Funding}
No funding was received for this research.

\section*{CRediT authorship contribution statement}
Chris Angstmann: Conceptualization, Methodology, Formal analysis, Writing -- review \& editing.
Tim Gebbie: Conceptualization, Methodology, Formal analysis, Writing -- original draft, Writing -- review \& editing.

\section*{AI disclosure}
The authors used generative AI tools to assist with language editing, checking and formatting of the manuscript.
\bibliographystyle{elsarticle-harv}
\bibliography{latticebsm-v9.3}
\appendix

\section{Reaction-boundary variance}
\label{app:localvolfromlinearlob}

This appendix provides a bridge from the displaced latent-book boundary to a finite-scale operational variance coefficient. The Green-function and finite-cutoff spectral calculation is developed in the companion paper \citep{AngstmannGebbie2026AdjointProjection}; only the part needed by the pricing lattice is recorded here.

Let $S(u)$ denote the operational-time log-price boundary. At response scale $\Delta$, define
\begin{equation}
 a_u^{(\Delta)}(s,u)=\frac{1}{\Delta}
 \Var\left[S(u+\Delta)-S(u)\mid S(u)=s\right].
 \label{eq:coarse-rate}
\end{equation}
The boundary is the zero of a bid--ask imbalance field, $\Phi(y(u),u)=0$. Near a simple zero, the latent book is replaced over the response window by its tangent, $\Phi^*(x,u)\simeq-\cL_u(x-y(u))$, where $\cL_u>0$ is the local slope. Filtering signed forcing through the resulting resilient response and taking its second cumulant gives, on the long-memory branch,
\begin{equation}
 \Xiop(s,u;\Delta)=
 \frac{\Aeff(s,u)}{\cL_u(s,u)^2}
 \Delta^{-\gamma(s,u)}
 \Fcal_{\gamma(s,u)}\!\left(\nu_u(s,u)\Delta\right),
 \label{eq:Xi}
\end{equation}
with $0<\gamma(s,u)<1$. Here $\Aeff$ is the centred signed-forcing scale, $\cL_u^{-2}$ is the inverse-square liquidity channel, and $\Fcal_\gamma(\nu_u\Delta)$ is the resilience response. Equation~\eqref{eq:Xi} is the zero-cutoff form; the finite-cutoff expression and its regime conditions remain in the companion paper. Under this long-memory closure, $a_u^{(\Delta)}\simeq\Xiop$ at the chosen response scale.

This finite-scale coefficient can be used in the general local kernel of Section~\ref{ssec:general_kernel_operational}. To see this directly, let $h\downarrow0$ and use the two-point local step
\begin{align}
 \delta_h(s,u)&=\sqrt{a_u^{(\Delta)}(s,u)h},\nonumber\\
 p_h(s,u)&=\frac12+\frac{\mu_u(s,u)h}{2\delta_h(s,u)}.
 \label{eq:local_chain_realisation}
\end{align}
For locally bounded $\mu_u$ and strictly positive $a_u^{(\Delta)}$, this is a valid probability for sufficiently small $h$, and
\begin{align}
 \E[\Delta S_h\mid s,u]&=\mu_u(s,u)h,\nonumber\\
 \E[(\Delta S_h)^2\mid s,u]&=a_u^{(\Delta)}(s,u)h.
 \label{eq:local_chain_moments}
\end{align}
\begin{remark}[Pointwise and compact-uniform validity]
Strict positivity of $a_u^{(\Delta)}(s,u)$ gives pointwise validity of the two-point probability for sufficiently small $h$. Uniform validity on a compact state--time set additionally requires local uniform control, for example a positive lower bound on $a_u^{(\Delta)}$ together with locally bounded $\mu_u$, or a compact bound on $|\mu_u|/\sqrt{a_u^{(\Delta)}}$.
\end{remark}
Thus the limiting second-order generator coefficient is $a_u^{(\Delta)}/2$. The homogeneous fixed-spacing lattice used for the BSM derivation has constant leading variance; state dependence is supplied by this local-kernel extension, not by $p_r$ alone. A common spatial grid could equivalently use holding probabilities or any local kernel with the same limiting moments. This is a Markov closure at fixed mesoscopic scale $\Delta$, not a claim that the boundary response is diffusive below its microstructural cutoff.

There is no conflict between the centred variance in equation~\eqref{eq:coarse-rate} and the raw second moment used in the generator. Since $\E[(\Delta S_h)^2\mid s,u]=\Var(\Delta S_h\mid s,u)+\E[\Delta S_h\mid s,u]^2$, their rates differ by $O(h)$ when $\E[\Delta S_h\mid s,u]=\mu_u h+o(h)$. The superscript $Q$ below is also substantive: the structural calculation supplies a functional form, but does not identify physical order-flow statistics with risk-neutral coefficients.\footnote{The diffusion parameter used in the companion latent-book response is an order-book coefficient and should not be confused with the generator coefficient $D_u$ in Section~\ref{ssec:local_moments_boundary}.}

For a deterministic activity clock $u=U(t)$, the calendar-time local variance is therefore
\begin{align}
 \sigloc^2(x,t)&=\alpha(t)a_u^{(\Delta),Q}(\ln x,U(t)),\nonumber\\
 \alpha(t)&=\frac{\dd U(t)}{\dd t}.
 \label{eq:deterministic-localvol}
\end{align}
The clock determines how quickly operational boundary variance is sampled in calendar time.

\section{Scale, clock, and rough-memory}
\label{app:scale_clock_rough_caveats}

Equation~\eqref{eq:Xi} is a hierarchy statement, not a calibrated volatility-surface model or a complete nonlinear latent-order-book theory. It shows how a finite-scale reaction-boundary variance can enter the general local kernel and then be projected into a local-volatility layer. Four qualifications matter.

First, the imbalance closure and the variance calculation concern different conditional moments. For a two-point step $\pm\delta_h$, finite drift requires
\begin{equation}
 2p_h(s,u)-1=\frac{\mu_u(s,u)h}{\delta_h(s,u)}
 +o\!\left(\frac{h}{\delta_h(s,u)}\right).
 \label{eq:app_weak_bias_scaling}
\end{equation}
Under diffusive scaling $h\asymp\delta_h^2$, the directional asymmetry is $O(\delta_h)$. Thus equation~\eqref{eq:prob_imbalance} is a finite-mesh relation whose response or normalization must scale accordingly along a triangular array. This restricts the first moment only. The order-one covariance amplitude of centred signed forcing, from which equation~\eqref{eq:Xi} is obtained, is compatible with weak directional bias.

Second, the book slope, resilience, forcing scale, and memory exponent are frozen over the response window. The construction is local to a simple boundary zero and does not assert that the full latent book remains linear during a liquidity crisis or across long calendar horizons.

Third, the event mesh, the Markov refinement step $h$, the response window $\Delta$, and calendar maturity are distinct. The compact response factor in equation~\eqref{eq:Xi} additionally assumes separation from the microstructural cutoff; when that separation fails, the finite-cutoff expression should be used.

Fourth, equation~\eqref{eq:deterministic-localvol} is a scalar projection only for a deterministic activity clock. With stochastic, filtered, or non-unique clocks, the forward and backward operators must still arise from the same projected state-price mechanism, and the one-dimensional state space may be insufficient \citep{AngstmannGebbie2026AdjointProjection}.

Rough volatility likewise sits above the scalar local-volatility branch. Variance or the forward-variance curve may be promoted to a memory-bearing state, but under the usual Markovian-projection conditions the Dupire coefficient is a conditional projection such as $\sigma_{\mathrm{Dup}}^2(x,t)=\E^Q[v_t\mid X_t=x]$, not the primitive rough dynamics \citep{Gyongy1986,GatheralJaissonRosenbaum2018,BayerFrizGatheral2016}.

\section{Generator convergence}
\label{app:generator_convergence}
This appendix separates the Taylor calculation in the body from convergence of the lattice process. First and second conditional moments identify a candidate generator, but weak convergence also requires control of large increments, tightness, and uniqueness of the limiting martingale problem. We use the standard generator and martingale-problem route \citep{EthierKurtz1986,StroockVaradhan1979}.

\subsection*{Operational triangular array}
For refinement index $m$, let $\Delta u_m\downarrow0$ and let $S_k^{(m)}$ be a time-inhomogeneous Markov chain at $u_k^{(m)}=k\Delta u_m$. Given $S_k^{(m)}=s$, let its increment have kernel $K_m(s,\mathrm dz;u)$. Set
\begin{align}
 S^{(m)}(u)
 &=S_{\lfloor u/\Delta u_m\rfloor}^{(m)},\nonumber\\
 (\mathcal L_mf)(s,u)
 &=\frac{1}{\Delta u_m}
 \int\!\{f(s+z)-f(s)\}K_m(s,\mathrm dz;u).
 \label{eq:app_discrete_generator}
\end{align}
The nearest-neighbour lattice is the special case supported on $\{-\Delta S_m,+\Delta S_m\}$.

\begin{assumption}[Local diffusion approximation]
\label{ass:app_local_diffusion}
On every finite operational horizon $[0,U]$ assume:
\begin{enumerate}[label=(\alph*)]
\item $S^{(m)}(0)\Rightarrow S_0$.
\item $\mu_u(s,u)$ and $D_u(s,u)>0$ are continuous, locally Lipschitz in $s$ uniformly on compact time intervals, and have at most linear growth.
\item Locally uniformly in $(s,u)$,
\begin{align}
 \frac{1}{\Delta u_m}\int zK_m(s,\mathrm dz;u)&\longrightarrow\mu_u(s,u), \label{eq:app_mean_convergence}\\
 \frac{1}{2\Delta u_m}\int z^2K_m(s,\mathrm dz;u)&\longrightarrow D_u(s,u).
 \label{eq:app_moment_convergence}
\end{align}
\item For every $\varepsilon>0$, locally uniformly in $(s,u)$,
\begin{equation}
 \frac{1}{\Delta u_m}\int_{|z|>\varepsilon}z^2K_m(s,\mathrm dz;u)\longrightarrow0.
 \label{eq:app_lindeberg}
\end{equation}
\item Compact containment holds: for every $\eta>0$ there is $R<\infty$ with
\begin{equation}
 \inf_m\Prob\left\{\sup_{0\leq u\leq U}|S^{(m)}(u)|\leq R\right\}\geq1-\eta.
 \label{eq:app_compact_containment}
\end{equation}
\end{enumerate}
\end{assumption}
For a nearest-neighbour chain, condition~\eqref{eq:app_lindeberg} follows from $\Delta S_m\to0$. State dependence in $D_u$ requires a state-dependent local kernel, mesh, or activity; it cannot arise from one homogeneous two-point kernel with fixed spacing and fixed operational step.

\begin{lemma}[Generator convergence]
\label{lem:app_generator_convergence}
Under Assumption~\ref{ass:app_local_diffusion}, for every $f\in C_c^3(\mathbb R)$,
\begin{equation}
 (\mathcal L_mf)(s,u)\longrightarrow(\mathcal L_uf)(s)
 :=\mu_u(s,u) f'(s)+D_u(s,u)f''(s)
 \label{eq:app_generator_convergence}
\end{equation}
locally uniformly on $\mathbb R\times[0,U]$.
\end{lemma}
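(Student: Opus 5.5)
Fix $f\in C_c^3(\mathbb R)$, a compact set $C\subset\mathbb R\times[0,U]$, and $\varepsilon\in(0,1]$. The plan is to Taylor-expand $f(s+z)-f(s)$ to second order with remainder and split the increment integral into small and large jumps. For $|z|\le\varepsilon$ we write
\[
 f(s+z)-f(s)=zf'(s)+\tfrac12 z^2 f''(s)+R(s,z),\qquad |R(s,z)|\le \tfrac16\|f'''\|_\infty|z|^3\le\tfrac16\|f'''\|_\infty\,\varepsilon z^2 .
\]
For $|z|>\varepsilon$ we only use that $f,f',f''$ are bounded. This gives
\begin{align*}
 (\mathcal L_mf)(s,u)
 ={}&f'(s)\frac{1}{\Delta u_m}\int zK_m(s,\mathrm dz;u)
 +\tfrac12 f''(s)\frac{1}{\Delta u_m}\int z^2K_m(s,\mathrm dz;u)\\
 &+\frac{1}{\Delta u_m}\int_{|z|\le\varepsilon}R(s,z)K_m(s,\mathrm dz;u)
 +E_m^{\varepsilon}(s,u),
\end{align*}
where $E_m^{\varepsilon}$ collects the large-jump terms:
\[
 \frac{1}{\Delta u_m}\int_{|z|>\varepsilon}\bigl\{f(s+z)-f(s)-zf'(s)-\tfrac12z^2f''(s)\bigr\}K_m(s,\mathrm dz;u).
\]

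The two leading terms converge uniformly on $C$ by \eqref{eq:app_mean_convergence} and \eqref{eq:app_moment_convergence}, because $f'$ and $f''$ are bounded. Their limit is $\mu_u f'+D_u f''$. The factor $\tfrac12\cdot 2D_u$ is exactly why $D_u$ carries no extra factor in \eqref{eq:app_generator_convergence}.

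The small-jump remainder is bounded by $\tfrac16\|f'''\|_\infty\,\varepsilon\,(\Delta u_m)^{-1}\int z^2K_m$. By \eqref{eq:app_moment_convergence} this is at most $\tfrac13\|f'''\|_\infty\,\varepsilon\,(\sup_C D_u+1)$ for large $m$. The supremum is finite because $D_u$ is continuous on the compact set $C$.

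The large-jump term is where Assumption~\ref{ass:app_local_diffusion}(d) is used, and it is the main point needing care. On $|z|>\varepsilon$ we have $1\le z^2/\varepsilon^2$ and $|z|\le z^2/\varepsilon$. So the integrand is bounded by
\[
 \bigl(2\|f\|_\infty\varepsilon^{-2}+\|f'\|_\infty\varepsilon^{-1}+\tfrac12\|f''\|_\infty\bigr)z^2 .
\]
Hence $\sup_C|E_m^{\varepsilon}|\le c_\varepsilon\,\sup_C(\Delta u_m)^{-1}\int_{|z|>\varepsilon}z^2K_m$, which tends to $0$ by \eqref{eq:app_lindeberg} for each fixed $\varepsilon$.

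To conclude, take $\limsup_m$ of $\sup_C|\mathcal L_mf-\mathcal L_uf|$. It is at most a constant times $\varepsilon$, and letting $\varepsilon\downarrow0$ gives local uniform convergence. Conditions (a), (b) and (e) are not needed here; they enter only in the subsequent tightness and martingale-problem steps.
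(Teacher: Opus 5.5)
Your proof is correct and follows the paper's argument: a second-order Taylor expansion with cubic remainder, a split at $|z|=\varepsilon$, the moment conditions for the leading terms, an $\varepsilon$-times-scaled-second-moment bound on the small-jump remainder, and the Lindeberg condition for the large jumps. Your bookkeeping is slightly cleaner than the paper's: you keep the full moments in the leading terms and subtract the quadratic Taylor polynomial inside the large-jump term, whereas the paper applies (c) to truncated moments without comment. One small correction: you do use (b), because $\sup_C D_u<\infty$ comes from the continuity of $D_u$, so only (a) and (e) are actually idle here.
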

\begin{proof}
Split the generator integral at $|z|=\varepsilon$. On $|z|\leq\varepsilon$, Taylor's formula gives
\begin{align}
 f(s+z)-f(s)
 &=zf'(s)+\tfrac12z^2f''(s)+R_f(s,z),\nonumber\\
 |R_f(s,z)|&\leq c_f|z|^3.
\end{align}
The first two terms converge by~\eqref{eq:app_mean_convergence} and~\eqref{eq:app_moment_convergence}. The scaled remainder is bounded on compacts by a constant times $\varepsilon(\Delta u_m)^{-1}\int z^2K_m$, and vanishes as $\varepsilon\downarrow0$. The large-increment part vanishes by boundedness of $f$ and~\eqref{eq:app_lindeberg}.
\end{proof}

\begin{theorem}[Operational lattice-to-diffusion convergence]
\label{thm:app_lattice_diffusion_convergence}
Under Assumption~\ref{ass:app_local_diffusion}, suppose the martingale problem for $\mathcal L_u$ with initial law $S_0$ is well posed. Then
\begin{equation}
 S^{(m)}\Rightarrow S\quad\text{in }D([0,U],\mathbb R),
 \label{eq:app_weak_convergence}
\end{equation}
where $S$ is the unique weak solution in law of
\begin{equation}
 \mathrm dS(u)=\mu_u(S(u),u)\,\mathrm du+\sqrt{2D_u(S(u),u)}\,\mathrm dW(u).
 \label{eq:app_limiting_sde}
\end{equation}
\end{theorem}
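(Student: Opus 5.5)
The plan is the standard Ethier--Kurtz route for Markov chains converging to a diffusion: tightness, then identification of limit points through the martingale problem, then uniqueness.

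\emph{Step 1: discrete martingales.} For each $f\in C_c^3(\mathbb R)$ the process
\[
M_m^f(u)=f\bigl(S^{(m)}(u)\bigr)-f\bigl(S^{(m)}(0)\bigr)-\sum_{k<\lfloor u/\Delta u_m\rfloor}\Delta u_m(\mathcal L_mf)\bigl(S_k^{(m)},u_k^{(m)}\bigr)
\]
is a martingale with respect to the natural filtration of the chain. This holds directly from the Markov property and the definition \eqref{eq:app_discrete_generator}. I would rewrite the Riemann sum as $\int_0^{u}(\mathcal L_mf)(S^{(m)}(v),v)\,\mathrm dv$. The error comes from the last partial step and from evaluating time at grid points; it is $O(\Delta u_m\sup|\mathcal L_mf|)$. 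By Lemma~\ref{lem:app_generator_convergence} and the compact support of $f$, $\mathcal L_mf$ is bounded uniformly in $m$ for large $m$: outside a compact set $f$ and its derivatives vanish, so only jumps landing in $\operatorname{supp}f$ matter. That control needs some care, so I would restrict the argument to the event of compact containment.

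\emph{Step 2: tightness in $D([0,U],\mathbb R)$.} I would use compact containment \eqref{eq:app_compact_containment} together with the Ethier--Kurtz criterion (Theorem 3.9.1/3.9.4). That criterion reduces relative compactness of $S^{(m)}$ to relative compactness of $f(S^{(m)})$ for $f$ in a dense algebra such as $C_c^\infty$. For a fixed $f$, the Aldous condition follows from the decomposition in Step 1. The compensator has increments bounded by $\delta\sup_m\|\mathcal L_mf\|_\infty$ over time intervals of length $\delta$. The martingale part is controlled through its predictable quadratic variation, which is $\sum\Delta u_m\cdot\Delta u_m^{-1}\int (f(s+z)-f(s))^2K_m$. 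This is bounded by $\|f'\|^2_\infty\Delta u_m^{-1}\int z^2K_m$ per unit time, and that quantity stays locally bounded by \eqref{eq:app_moment_convergence}. The Lindeberg condition \eqref{eq:app_lindeberg} additionally shows that the maximal jump of $S^{(m)}$ on $[0,U]$ tends to zero in probability on compacts. Consequently every limit point lies in $C([0,U],\mathbb R)$.

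\emph{Step 3: identification and uniqueness.} Take a subsequential limit $S$ and use Skorokhod representation. Lemma~\ref{lem:app_generator_convergence} gives locally uniform convergence, and continuity of $\mu_u$ and $D_u$ makes $\mathcal L_uf$ continuous and bounded. Continuity of the limit path then yields
\[
f(S(u))-f(S(0))-\int_0^u(\mathcal L_vf)(S(v))\,\mathrm dv
\]
as a limit of $M_m^f(u)$. This limit is a martingale, because $M_m^f$ is uniformly bounded on $[0,U]$ and one can pass to the limit in $\mathbb E[(M_m^f(u)-M_m^f(v))\prod_i h_i(S^{(m)}(v_i))]=0$, using continuity points of the limit, which is all times since $S$ is continuous. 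Assumption~(a) fixes the initial law. Well-posedness of the martingale problem, which is assumed and also follows from the local Lipschitz and linear-growth hypotheses of Assumption~\ref{ass:app_local_diffusion}(b) via Stroock--Varadhan, then identifies every limit point. The full sequence therefore converges. Finally, the solution of the martingale problem for $\mu_u\partial_s+D_u\partial_{ss}$ is the weak solution of \eqref{eq:app_limiting_sde}, by the standard martingale representation with $\sigma=\sqrt{2D_u}$.

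The main obstacle is the uniform boundedness of $\mathcal L_mf$ used in Steps 1 and 2. Lemma~\ref{lem:app_generator_convergence} gives only \emph{local} uniform convergence, while states far from $\operatorname{supp} f$ can still jump into it. I would handle this by localisation: stop the chains at $\tau_R^m=\inf\{u:|S^{(m)}(u)|>R\}$, prove all statements for the stopped processes, where local uniformity suffices, and then let $R\to\infty$ using compact containment \eqref{eq:app_compact_containment}.
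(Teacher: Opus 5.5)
Your proposal is correct and follows the same route as the paper's proof: the one-step discrete martingale, tightness from generator convergence, the Lindeberg condition~\eqref{eq:app_lindeberg} and compact containment, identification of subsequential limits as solutions of the $\mathcal L_u$ martingale problem, and uniqueness from well-posedness, all along the Ethier--Kurtz/Stroock--Varadhan lines the paper cites. Your stopping-time localisation at $\tau_R^m$ addresses a real point: Lemma~\ref{lem:app_generator_convergence} controls $\mathcal L_m f$ only locally uniformly, and far-away states can still jump into $\operatorname{supp} f$. The paper's compressed proof leaves this step implicit in its appeal to compact containment, and your localisation handles it correctly.
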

\begin{proof}
For $f\in C_c^3(\mathbb R)$, the one-step conditional expectation makes
\begin{align}
 &f(S^{(m)}(u))-f(S^{(m)}(0))\nonumber\\
 &\quad-
 \sum_{k<u/\Delta u_m}\!\Delta u_m
 (\mathcal L_mf)(S_k^{(m)},u_k^{(m)})
 \label{eq:app_discrete_martingale}
\end{align}
a martingale up to the final incomplete step. Generator convergence, the large-increment condition, and compact containment give tightness and identify every subsequential limit as a solution of the $\mathcal L_u$ martingale problem. Well-posedness makes the limit unique in law, so the full sequence converges; see \citet{EthierKurtz1986} and \citet{StroockVaradhan1979}.
\end{proof}

\subsection*{What the theorem does and does not prove}
The theorem proves weak convergence of processes. It does not by itself prove pointwise convergence of transition densities. For the limit, the formal adjoint is
\begin{equation}
 (\mathcal L_u^*p)(s,u)
 =-\partial_s\!\left\{\mu_u(s,u)p(s,u)\right\}
 +\partial_{ss}\!\left\{D_u(s,u)p(s,u)\right\}.
 \label{eq:app_adjoint_generator}
\end{equation}
With sufficient regularity and non-degeneracy, the limiting density satisfies the forward and backward equations in the body. Density convergence of the lattice is a stronger local-limit result requiring additional assumptions and is not claimed here.

\section{Deterministic clock projection}
\label{app:deterministic_clock_projection}
This appendix clarifies the deterministic clock step used in the body. It treats a time-only, absolutely continuous clock. This is the case in which operational dynamics can be projected into calendar time by a local rescaling of the generator. State-dependent, endogenous, stochastic, direct-subordinator, and inverse-subordinator clocks are different constructions and are not covered by the theorem below.

\subsection*{Time change of the martingale problem}
Let $S(u)$ solve the operational-time martingale problem for the time-inhomogeneous operator
\begin{align}
 (\mathcal L_u f)(s)
 &=\mu_u(s,u)f'(s)+D_u(s,u)f''(s),\nonumber\\
 &\hspace{24mm} f\in C_c^2(\mathbb R).
 \label{eq:app_clock_operational_generator}
\end{align}
Thus, for each $f\in C_c^2(\mathbb R)$,
\begin{equation}
 M_f(u):=f(S(u))-f(S(0))
 -\int_0^u(\mathcal L_vf)(S(v))\,\mathrm dv
 \label{eq:app_clock_operational_martingale}
\end{equation}
is a martingale in the operational filtration $\{\mathcal F_u\}$.

Let $U:[0,T]\to[0,\bar U]$ be deterministic, nondecreasing, and absolutely continuous, with
\begin{align}
 U(t)&=U(0)+\int_0^t\alpha(\tau)\,\mathrm d\tau,\nonumber\\
 \alpha(t)&\geq0\quad\text{for almost every }t.
 \label{eq:app_clock_absolute_continuity}
\end{align}
Define the calendar-time process and filtration by
\begin{align}
 S_t&:=S(U(t)),\nonumber\\
 \mathcal F_t^U&:=\mathcal F_{U(t)}.
 \label{eq:app_clock_projected_process}
\end{align}
Thus, in this subsection, $S_0=S(U(0))$ denotes the initial projected state.

\begin{theorem}[Deterministic clock projection]
\label{thm:app_deterministic_clock_projection}
Suppose the operational martingale problem for $\mathcal L_u$ is well posed and the integrals below are finite. Under the deterministic clock~\eqref{eq:app_clock_absolute_continuity}, the projected process $(S_t)_{0\leq t\leq T}$ solves the calendar-time martingale problem for
\begin{align}
 (\mathcal A_tf)(s)
 &=\alpha(t)(\mathcal L_{U(t)}f)(s)\nonumber\\
 &=\alpha(t)\mu_u(s,U(t))f'(s)\nonumber\\
 &\quad+\alpha(t)D_u(s,U(t))f''(s).
 \label{eq:app_clock_calendar_generator}
\end{align}
In particular, intervals on which $\alpha(t)=0$ freeze the projected process and give $\mathcal A_t=0$ almost everywhere on those intervals. Strict monotonicity of $U$ is not required for this generator statement.
\end{theorem}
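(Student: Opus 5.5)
The approach is to time-change the operational martingale $M_f$ directly, without passing through an SDE representation. Fix $f\in C_c^2(\mathbb R)$ and define
\[
N_f(t):=M_f(U(t))=f(S_t)-f(S_0)-\int_{U(0)}^{U(t)}(\mathcal L_vf)(S(v))\,\mathrm dv,
\]
where, since $S_0=S(U(0))$, we use the martingale $M_f$ restarted at operational time $U(0)$. This is harmless because the increments of $M_f$ over $[U(0),U(t)]$ are what matter.

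The first step is to show that $N_f$ is an $\{\mathcal F_t^U\}$-martingale. The clock $U$ is deterministic and nondecreasing, so for $t_1\le t_2$ the times $U(t_1)\le U(t_2)$ are deterministic, and in particular they are stopping times of $\{\mathcal F_u\}$. Optional sampling is therefore trivial here: it is just the martingale property of $M_f$ at two fixed times,
\[
\mathbb E[M_f(U(t_2))\mid\mathcal F_{U(t_1)}]=M_f(U(t_1)).
\]
Integrability of $N_f$ follows from boundedness of $f$ together with the assumed finiteness of the integrals. Adaptedness holds because $\mathcal F_t^U=\mathcal F_{U(t)}$ by definition.

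The second step is to rewrite the compensator in calendar time, and this is the main obstacle. Because $U$ is absolutely continuous and nondecreasing, the change-of-variables formula for absolutely continuous monotone maps gives, for a locally integrable $g$,
\[
\int_{U(0)}^{U(t)}g(v)\,\mathrm dv=\int_0^t g(U(\tau))\,\alpha(\tau)\,\mathrm d\tau .
\]
Apply this pathwise with $g(v)=(\mathcal L_vf)(S(v))$. This $g$ is measurable and locally bounded, since $f$ has compact support and $\mu_u,D_u$ are locally bounded or continuous. Using $S(U(\tau))=S_\tau$, the compensator becomes
\[
\int_0^t\alpha(\tau)(\mathcal L_{U(\tau)}f)(S_\tau)\,\mathrm d\tau=\int_0^t(\mathcal A_\tau f)(S_\tau)\,\mathrm d\tau .
\]
No inverse of $U$ is used anywhere, which is why strict monotonicity is not needed. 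The delicate point is that the substitution formula must hold for a nondecreasing absolutely continuous $U$ with $\alpha$ allowed to vanish. I would cite the standard result that $(g\circ U)\alpha$ is integrable and the identity holds whenever $g$ is integrable on $[U(0),U(t)]$. Alternatively, I would prove it first for indicators of intervals, where it reduces to the fundamental theorem of calculus for $U$, and then extend by a monotone class argument.

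The final step records the degenerate case and the conclusion. On any interval where $\alpha=0$ a.e., $U$ is constant, so $S_t=S(U(t))$ is frozen and the integrand $\alpha(\tau)(\mathcal L_{U(\tau)}f)(S_\tau)$ vanishes a.e. This gives $\mathcal A_t=0$ a.e. there, consistent with $N_f$ being constant on that interval.

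Hence $(S_t)$ solves the calendar-time martingale problem for $\mathcal A_t=\alpha(t)\mathcal L_{U(t)}$ on $C_c^2(\mathbb R)$, with the explicit coefficient form following by substituting the expression for $\mathcal L_u$. The well-posedness hypothesis of the operational problem is only needed for uniqueness statements about the projected law; it is not needed for the existence claim proved here.
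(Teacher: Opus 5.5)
Your proposal is correct and follows the paper's proof: evaluate the operational martingale $M_f$ at the deterministic times $U(t)$ to obtain an $\mathcal F^U_t$-martingale, rewrite the compensator via the absolutely continuous change of variables as $\int_0^t \alpha(\tau)(\mathcal L_{U(\tau)}f)(S_\tau)\,\mathrm d\tau$, and observe that flat stretches of $U$ freeze the projected process. The additional points you make only fill in details the paper leaves implicit: restarting $M_f$ at $U(0)$, justifying the substitution for a nondecreasing clock whose $\alpha$ may vanish, and noting that well-posedness is not needed for this existence statement.
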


\begin{proof}
Evaluate~\eqref{eq:app_clock_operational_martingale} at the deterministic stopping time $U(t)$. Deterministic time change preserves the martingale property in the filtration $\mathcal F_t^U$. Hence
\begin{equation}
 f(S_t)-f(S_0)
 -\int_{U(0)}^{U(t)}(\mathcal L_vf)(S(v))\,\mathrm dv
 \label{eq:app_clock_changed_martingale}
\end{equation}
is an $\mathcal F_t^U$-martingale. Absolute continuity of $U$ gives the change of variables
\begin{align}
 &\int_{U(0)}^{U(t)}(\mathcal L_vf)(S(v))\,\mathrm dv\nonumber\\
 &\quad=\int_0^t\alpha(\tau)
 (\mathcal L_{U(\tau)}f)(S_\tau)\,\mathrm d\tau.
 \label{eq:app_clock_change_variables}
\end{align}
Substitution into~\eqref{eq:app_clock_changed_martingale} identifies the generator~\eqref{eq:app_clock_calendar_generator}. If $\alpha=0$ on an interval, then $U$ and therefore $S_t$ are constant there.
\end{proof}

Equivalently, the projected martingale problem has the weak SDE representation
\begin{align}
 \mathrm dS_t
 ={}&\alpha(t)\mu_u(S_t,U(t))\,\mathrm dt\nonumber\\
 &+\sqrt{2\alpha(t)D_u(S_t,U(t))}
 \,\mathrm dW_t^U.
 \label{eq:app_clock_calendar_sde}
\end{align}
for a Brownian motion $W^U$ in the weak representation. This follows from the projected generator and does not identify the original operational Brownian path pointwise \citep{EthierKurtz1986}.

\subsection*{Price coordinates and the pricing equation}
Set $X_t=\exp\{S_t\}$ and let $V(x,t)$ be smooth. Since
\begin{equation}
 \partial_s=x\partial_x,
 \qquad
 \partial_{ss}=x^2\partial_{xx}+x\partial_x,
\end{equation}
the projected price-coordinate generator is
\begin{align}
 (\mathcal A_t^XV)(x,t)
 ={}&\alpha(t)D_u(\ln x,U(t))x^2V_{xx}(x,t)\nonumber\\
 &+\alpha(t)\mu_u(\ln x,U(t))xV_x(x,t)\nonumber\\
 &+\alpha(t)D_u(\ln x,U(t))xV_x(x,t).
 \label{eq:app_clock_price_generator}
\end{align}
Under a selected pricing measure $Q$, the discounted cum-dividend price is a martingale when
\begin{equation}
 \alpha(t)\{\mu_u^Q(\ln x,U(t))+D_u^Q(\ln x,U(t))\}=r(t)-q(t).
 \label{eq:app_clock_risk_neutral_drift}
\end{equation}
Define the projected calendar-time local variance by
\begin{equation}
 \sigma_{\mathrm{loc}}^2(x,t)
 :=2\alpha(t)D_u^Q(\ln x,U(t)).
 \label{eq:app_clock_projected_variance}
\end{equation}
For a European payoff $g$, the backward pricing equation is then
\begin{align}
 &V_t+(r-q)xV_x
 +\frac12\sigma_{\mathrm{loc}}^2(x,t)x^2V_{xx}\nonumber\\
 &\hspace{32mm}-rV=0,\qquad V(x,T)=g(x).
 \label{eq:app_clock_pricing_pde}
\end{align}
under the usual conditions ensuring that the conditional-expectation representation and the associated backward problem are well defined.

If $\alpha(t)=0$ while $r(t)-q(t)\neq0$, condition~\eqref{eq:app_clock_risk_neutral_drift} cannot hold for a price process driven only by the frozen operational state. Thus a flat clock interval is mathematically allowed by the projection theorem, but its use in an arbitrage-free stock-pricing model imposes a compatibility condition: either the net carry also vanishes there or the model must contain an additional price or carry mechanism not represented by the frozen operational kernel.

\subsection*{Calendar and operational discounting}
Calendar discounting remains primitive in the pricing problem. If one nevertheless defines an operational short rate $\rho(u)$ and requires equality of discount factors,
\begin{equation}
 \int_t^T r(\tau)\,\mathrm d\tau
 =\int_{U(t)}^{U(T)}\rho(v)\,\mathrm dv,
 \label{eq:app_clock_discount_consistency}
\end{equation}
then, wherever $\alpha(t)>0$ and the relevant derivatives exist,
\begin{equation}
 \rho(U(t))=\frac{r(t)}{\alpha(t)}.
 \label{eq:app_clock_operational_rate}
\end{equation}
This conversion requires local invertibility of the clock. It is not available on a flat interval with $\alpha=0$ unless the accumulated calendar rate also vanishes there. The generator projection therefore needs only a nondecreasing absolutely continuous clock, whereas an operational-rate representation needs the stronger condition $\alpha>0$ on the interval being inverted.

\subsection*{Boundary of the deterministic result}
The theorem establishes only the deterministic time-only projection
\begin{equation}
 \mathcal A_t=\alpha(t)\mathcal L_{U(t)}.
\end{equation}
When a calendar-time pricing model is constructed from an operational kernel, a clock, and a pricing rule, the more qualified structural dependence is
\begin{equation}
 (K_{\Delta u},U,Q)\longmapsto\mathcal A_t^Q.
 \label{eq:app_joint_kernel_clock_measure}
\end{equation}
This notation records joint dependence; it does not assert that clock projection and measure selection commute. Outside the deterministic local-diffusion setting, a change of measure may alter the operational kernel, a jump compensator, the clock law, or their dependence. The simpler sequential notation used in the body remains appropriate for the deterministic construction developed there.

A state-dependent rate requires a path-dependent additive functional. For a time-homogeneous operational semigroup, direct independent subordination gives the classical nonlocal subordinated generator; a time-inhomogeneous operational model instead requires an augmented-state or evolution-family formulation. Inverse or renewal clocks, and stochastic or endogenous clocks, are also separate constructions and are not covered by this local deterministic result; they may respectively produce memory or additional unspanned risk.

\section{Clock--variance equivalence}
\label{app:clock_variance_equivalence}
This appendix formalises the deterministic clock--variance identification point used in the body. The result is deliberately narrow. It concerns operational models which, after deterministic clock projection and selection of a pricing measure, induce the same one-dimensional calendar-time local generator. Within that class, the operational variance and the activity clock are not separately identified by European option prices. This is an observational-equivalence statement, not a claim that arbitrary stochastic-clock or hidden-state models are equivalent.

\subsection*{Projected pricing specifications}
For $i\in\{1,2\}$, let
\begin{align}
 U_i(t)&=U_i(0)+\int_0^t\alpha_i(\tau)\,\mathrm d\tau,\nonumber\\
 \alpha_i(t)&>0\quad\text{for almost every }t.
 \label{eq:app_equiv_clocks}
\end{align}
be deterministic, absolutely continuous clocks. Under a selected pricing measure $Q_i$, let the operational log-price generator be
\begin{equation}
 (\mathcal L_{i,u}^{Q_i}f)(s)
 =\mu_i^{Q_i}(s,u)f'(s)+D_i^{Q_i}(s,u)f''(s).
 \label{eq:app_equiv_operational_generators}
\end{equation}
By Theorem~\ref{thm:app_deterministic_clock_projection}, the projected calendar generator is
\begin{equation}
 (\mathcal A_{i,t}^{Q_i}f)(s)
 =b_i^Q(s,t)f'(s)+a_i^Q(s,t)f''(s),
 \label{eq:app_equiv_calendar_generators}
\end{equation}
where
\begin{align}
 b_i^Q(s,t)&=\alpha_i(t)\mu_i^{Q_i}(s,U_i(t)),\nonumber\\
 a_i^Q(s,t)&=\alpha_i(t)D_i^{Q_i}(s,U_i(t)).
 \label{eq:app_equiv_projected_coefficients}
\end{align}
In price coordinates $x=e^s$, the martingale restriction is
\begin{equation}
 b_i^Q(s,t)+a_i^Q(s,t)=r(t)-q(t).
 \label{eq:app_equiv_martingale_restriction}
\end{equation}

\begin{definition}[Calendar-generator equivalence]
\label{def:app_calendar_generator_equivalence}
Two deterministic operational pricing specifications are calendar-generator equivalent on $[0,T]$ if they have the same initial calendar state, the same rates and dividends, and
\begin{equation}
 b_1^Q(s,t)=b_2^Q(s,t),
 \qquad
 a_1^Q(s,t)=a_2^Q(s,t)
 \label{eq:app_equiv_definition}
\end{equation}
for almost every $t$ and all $s$ in the common state domain.
\end{definition}

\begin{theorem}[Clock--variance equivalence]
\label{thm:app_clock_variance_equivalence}
Suppose the calendar martingale problem associated with the common coefficients in~\eqref{eq:app_equiv_definition} is well posed for every relevant starting pair $(s,t)$. If two deterministic operational specifications are calendar-generator equivalent, then their projected log-price Markov families have the same law on $D([t,T],\mathbb R)$ from each such starting pair. Consequently, for every integrable European payoff $g$ and every $0\leq t\leq T$, their prices agree:
\begin{align}
 V_i(x,t)
 &=B_r(t,T)\,
 \mathbb E^{Q_i}\!\left[g(X_T^{(i)})\mid X_t^{(i)}=x\right],\nonumber\\
 V_1(x,t)&=V_2(x,t).
 \label{eq:app_equiv_european_prices}
\end{align}
More generally, any integrable calendar-time path functional has the same value under the common projected law and common discounting convention.
\end{theorem}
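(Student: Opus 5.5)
The plan is to reduce everything to uniqueness in law for the common calendar martingale problem. First, invoke Theorem~\ref{thm:app_deterministic_clock_projection} separately for each specification $i\in\{1,2\}$. Under $Q_i$, the projected process $S^{(i)}_\tau=S^{(i)}(U_i(\tau))$, $\tau\in[t,T]$, started from $S^{(i)}_t=s$, solves the calendar-time martingale problem for $\mathcal A^{Q_i}_{i,\tau}$. Its coefficients are $b_i^Q(s,\tau)=\alpha_i(\tau)\mu_i^{Q_i}(s,U_i(\tau))$ and $a_i^Q(s,\tau)=\alpha_i(\tau)D_i^{Q_i}(s,U_i(\tau))$, as in~\eqref{eq:app_equiv_projected_coefficients}. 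Concretely, for each $f\in C_c^2(\mathbb R)$,
\[
f(S^{(i)}_\tau)-f(s)-\int_t^\tau (\mathcal A^{Q_i}_{i,v}f)(S^{(i)}_v)\,\mathrm dv
\]
is a martingale. Calendar-generator equivalence~\eqref{eq:app_equiv_definition} says $\mathcal A^{Q_1}_{1,v}f=\mathcal A^{Q_2}_{2,v}f$ for almost every $v$. Changing the integrand on a Lebesgue-null set of times does not change the integral. So both processes solve one and the same martingale problem, from the same starting pair $(s,t)$.

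The second step is where I expect the main obstacle to lie. We must pass from "same martingale problem" to "same law on $D([t,T],\mathbb R)$". The hypothesis of well-posedness for every relevant starting pair gives uniqueness of one-dimensional marginals and, in the standard Ethier--Kurtz/Stroock--Varadhan way, uniqueness of the full path law. Two points need care, and I would handle them by remarks rather than heavy machinery. (i) The coefficients are only defined almost everywhere in $t$, so well-posedness must be understood for the a.e.-defined time-inhomogeneous operator. The standard route is to treat it via the space--time process $(S_\tau,\tau)$, which is legitimate because the integral formulation only sees $\mathrm dv$-a.e.\ values. (ii) We need the Markov property of the projected families in order to condition on $X_t=x$. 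This also follows from well-posedness for all starting pairs, by the standard regular-conditional-probability argument: the conditional law after time $t$ again solves the martingale problem from $(S_t,t)$. Hence the conditional laws given $S^{(i)}_t=s$ coincide.

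Third, transfer to prices. The map $s\mapsto x=e^s$ is a bijection, so $X^{(i)}_\tau=\exp S^{(i)}_\tau$ has identical conditional path laws given $X^{(i)}_t=x$. The rates and dividends are common by Definition~\ref{def:app_calendar_generator_equivalence}, so $B_r(t,T)$ is the same deterministic factor in both representations. For an integrable European payoff, $\mathbb E^{Q_i}[g(X_T^{(i)})\mid X_t^{(i)}=x]$ depends only on the common marginal at $T$. This gives $V_1(x,t)=V_2(x,t)$ in~\eqref{eq:app_equiv_european_prices}. For a general integrable path functional $\Phi$ on $D([t,T],\mathbb R)$, the same equality follows because the conditional path laws coincide and the discounting convention is deterministic and common.

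Finally, I would note two points explicitly. The martingale restriction~\eqref{eq:app_equiv_martingale_restriction} is not needed for the equality itself; it only guarantees that the common values are legitimate arbitrage-free prices. And $\alpha_i>0$ a.e.\ is used only so that the clocks are genuine activity clocks. The equality itself uses only the projection theorem and uniqueness.
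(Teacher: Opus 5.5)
Your proposal is correct and follows the paper's own proof. Both specifications solve the same calendar martingale problem, by the projection theorem together with Definition~\ref{def:app_calendar_generator_equivalence}; well-posedness then gives a common Markov family and a common path law, and applying the same deterministic discounted conditional-expectation functional yields equal European and path-functional values. Your remarks on the a.e.-in-time coefficients, the regular-conditional-probability Markov argument, and the fact that neither the martingale restriction nor $\alpha_i>0$ is needed for the equality are sound elaborations of the paper's brief sketch.
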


\begin{proof}
By Definition~\ref{def:app_calendar_generator_equivalence}, both projected processes solve the same time-inhomogeneous martingale problem from each relevant $(s,t)$. Well-posedness gives a common Markov family and uniqueness in law on path space. Applying the same discounted conditional-expectation functional gives the price equality; equality for integrable path functionals follows from equality of path laws.
\end{proof}

The path-law statement is important for interpretation. A deterministic refactorisation of one and the same calendar generator cannot by itself produce different barrier prices or different calendar-time hedging distributions. Such differences require a weaker matching criterion, such as equality only of terminal marginals, equality only in a diffusion limit, or an enlarged random-clock or hidden-state model. This distinction is used in the applied example developed in \ref{app:unspanned_clock_risk}.

\subsection*{Clock--variance factorisation}
Under the martingale restriction~\eqref{eq:app_equiv_martingale_restriction}, equality of projected local variances is sufficient for calendar-generator equivalence. In particular, suppose
\begin{equation}
 \alpha_1(t)D_1^{Q_1}(s,U_1(t))
 =\alpha_2(t)D_2^{Q_2}(s,U_2(t))
 =\frac12\sigma_{\mathrm{loc}}^2(e^s,t).
 \label{eq:app_equiv_variance_product}
\end{equation}
Then the risk-neutral log drifts are fixed by
\begin{equation}
 \alpha_i(t)\mu_i^{Q_i}(s,U_i(t))
 =r(t)-q(t)-\frac12\sigma_{\mathrm{loc}}^2(e^s,t),
 \label{eq:app_equiv_log_drift}
\end{equation}
and the projected generators coincide.

This generates an equivalence class of operational representations. Let $U$ be any strictly increasing, absolutely continuous clock with activity $\alpha>0$. For a prescribed projected coefficient $a^Q(s,t)$, define along the clock image
\begin{align}
 D_U^Q(s,U(t))&:=\frac{a^Q(s,t)}{\alpha(t)},\nonumber\\
 \mu_U^Q(s,U(t))&:=
 \frac{r(t)-q(t)-a^Q(s,t)}{\alpha(t)}.
 \label{eq:app_equiv_factorisation}
\end{align}
Every admissible choice of $U$ produces the same calendar generator after projection. The option model therefore identifies the projected coefficient $a^Q$, while the decomposition into $U$, $D_U^Q$, and $\mu_U^Q$ requires additional operational information or structural restrictions.

\begin{corollary}[Non-identifiability from the European surface]
\label{cor:app_option_surface_nonidentifiability}
Assume a sufficiently smooth, arbitrage-free European call surface is generated by a one-dimensional deterministic local-volatility diffusion with known rates and dividends, and that the Dupire inversion is well defined. Then the call surface identifies the projected risk-neutral local variance
\begin{equation}
 \sigma_{\mathrm{loc}}^2(x,t)
 =2\alpha(t)D_u^Q(\ln x,U(t)),
 \label{eq:app_equiv_identified_product}
\end{equation}
but does not separately identify $\alpha(t)$ and $D_u^Q(\ln x,U(t))$ within the deterministic operational class.
\end{corollary}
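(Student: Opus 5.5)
The proof has two parts, an identification half and a non-identification half. Each follows from results already stated: the Dupire relation~\eqref{eq:dupire_type_relation}, the deterministic clock projection of Theorem~\ref{thm:app_deterministic_clock_projection}, and the clock--variance equivalence of Theorem~\ref{thm:app_clock_variance_equivalence} together with the factorisation~\eqref{eq:app_equiv_factorisation}.

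\emph{Identification.} By hypothesis the surface is generated by a one-dimensional deterministic local-volatility diffusion. Theorem~\ref{thm:app_deterministic_clock_projection} says that any deterministic operational specification projects to the calendar generator with relative local variance $2\alpha(t)D_u^Q(\ln x,U(t))$. The associated pricing PDE is~\eqref{eq:app_clock_pricing_pde}, and the forward density obeys~\eqref{eq:forward_localvol_density}. Applying Breeden--Litzenberger, $C_{KK}=B_r(0,T)q_X(K,T)$, and integrating the Fokker--Planck equation twice in strike gives the Dupire strike--maturity equation. Since the surface is smooth, arbitrage-free and $C_{KK}>0$, this equation can be solved pointwise, so~\eqref{eq:dupire_type_relation} determines $\sigma_{\mathrm{loc}}^2(K,T)$ uniquely from $C$. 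Any two deterministic operational specifications producing the same surface therefore share the same product $2\alpha D_u^Q$ at every $(x,t)$ in the support. This establishes~\eqref{eq:app_equiv_identified_product}.

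\emph{Non-identification.} Fix the identified coefficient $a^Q(s,t)=\tfrac12\sigma_{\mathrm{loc}}^2(e^s,t)$. Take two strictly increasing absolutely continuous clocks: the identity clock, with $\alpha_1\equiv1$, and a second clock with activity $\alpha_2=c\,\alpha_1$, where $c\neq1$ is a positive constant (any non-proportional positive activity would also do). Define $D_i^{Q}$ and $\mu_i^{Q}$ by~\eqref{eq:app_equiv_factorisation}. Both specifications then satisfy~\eqref{eq:app_equiv_variance_product} and~\eqref{eq:app_equiv_log_drift}, so they are calendar-generator equivalent in the sense of Definition~\ref{def:app_calendar_generator_equivalence}. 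Theorem~\ref{thm:app_clock_variance_equivalence} then gives identical European prices, and so an identical call surface. Yet $\alpha_1\neq\alpha_2$, and $D_1^Q(s,U_1(t))=c\,D_2^Q(s,U_2(t))\neq D_2^Q(s,U_2(t))$ wherever $a^Q>0$. The surface therefore cannot separate the factors.

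The main technical point is well-posedness. Theorem~\ref{thm:app_clock_variance_equivalence} requires the calendar martingale problem with the common coefficients to be well posed. I would secure this by assuming, as part of ``the Dupire inversion is well defined'', that $\sigma_{\mathrm{loc}}$ is continuous, positive and of the growth or Lipschitz type in Assumption~\ref{ass:app_local_diffusion}. The defined $D_i^Q$ and $\mu_i^Q$ then inherit these properties along each clock image, because $\alpha_i>0$ and they are merely rescaled by $1/\alpha_i$.

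A second, minor point is that~\eqref{eq:app_equiv_factorisation} defines the operational coefficients only on the image of each clock. This is harmless: only those values enter the projected generator, and for strictly increasing clocks the image is an interval.
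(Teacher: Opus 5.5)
Your proposal is correct and follows the paper's own proof. Dupire inversion identifies the projected coefficient, and the factorisation \eqref{eq:app_equiv_factorisation} combined with Theorem~\ref{thm:app_clock_variance_equivalence} produces distinct clock--variance splittings that generate the same European surface. Your concrete two-clock instance ($\alpha_2=c\,\alpha_1$), your Breeden--Litzenberger sketch and your well-posedness caveat make explicit what the paper leaves to ``the stated local-volatility assumptions.''
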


\begin{proof}
Under the stated local-volatility assumptions, the call surface determines the calendar-time local-variance coefficient through the Dupire relation in~\eqref{eq:dupire_type_relation} \citep{Dupire1994,Gatheral2006}. Equation~\eqref{eq:app_equiv_factorisation} constructs infinitely many positive-clock factorisations of the same coefficient. By Theorem~\ref{thm:app_clock_variance_equivalence}, these factorisations yield the same projected pricing law and hence the same European surface.
\end{proof}

\subsection*{What additional information can identify the components?}
Separate identification requires information not contained in the vanilla option surface alone. Examples include an observed event-count or transaction-intensity clock, a structural reaction-boundary model fixing $D_u^Q$, restrictions linking physical and pricing kernels, or claims whose settlement or payoff is written directly on operational activity. \ref{app:localvolfromlinearlob} provides one possible structural restriction by relating a finite-scale boundary variance to liquidity slope, forcing intensity, memory, resilience, and measurement scale. That relation does not remove the need to specify the calendar clock.

The qualification ``within the deterministic operational class'' is essential. A stochastic clock can alter joint laws, introduce a separate risk premium, or produce a non-Markovian calendar projection. A general hidden-state model may reproduce the same European marginals without sharing the same calendar generator. The corollary does not identify such models with the deterministic local-volatility projection; it says only that European prices cannot split a projected deterministic variance coefficient into a unique activity factor and a unique operational variance factor. 

\subsection*{Relation to the synthesis in the body}
The equivalence result gives a precise form to the qualified dependence introduced in \ref{app:deterministic_clock_projection}:
\begin{equation}
 (K_{\Delta u},U,Q)\longmapsto\mathcal A_t^Q
 \longmapsto\text{pricing equations}.
\end{equation}
Different operational kernels, deterministic clocks, and compatible pricing specifications can map to the same $\mathcal A_t^Q$. Once that happens, European pricing cannot recover which operational representation generated the projected coefficient. The reaction-boundary interpretation remains useful, but it is a structural interpretation requiring information beyond the option surface. This is the identification content of keeping the kernel, clock, and pricing measure separate.

\section{Unequal finite-mesh paths}
\label{app:finite_mesh_non_equivalence}
The equivalence theorem in \ref{app:clock_variance_equivalence} uses equality of the full projected calendar generator and therefore equality of path laws. It cannot produce different values for path-dependent claims. This appendix uses a deliberately weaker matching criterion: two arbitrage-free finite-mesh models have the same terminal distribution and therefore the same prices for every European payoff at one maturity, but distribute activity differently across the intermediate date. A discretely monitored claim then separates them.

\subsection*{Two-period construction}
Take dates $t_0=0<t_1<t_2=T$, zero rates and dividends, and an initial forward or discounted stock value $F_0>2\varepsilon$ for some $\varepsilon>0$. The positivity condition keeps every node strictly positive. Consider two models under their stated pricing measures.

In Model A, activity is spread across both periods. Let $\xi_1$ and $\xi_2$ be independent random variables with
\begin{equation}
 \Prob(\xi_j=1)=\Prob(\xi_j=-1)=\frac12,
 \qquad j=1,2,
 \label{eq:app_finite_model_a_increments}
\end{equation}
and define
\begin{equation}
 F_{t_1}^{A}=F_0+\varepsilon\xi_1,
 \qquad
 F_T^{A}=F_0+\varepsilon(\xi_1+\xi_2).
 \label{eq:app_finite_model_a}
\end{equation}
This is the usual two-step symmetric nearest-neighbour martingale.

In Model B, the first calendar interval contains no price activity and the same terminal law is applied in the second interval. Set
\begin{equation}
 F_{t_1}^{B}=F_0,
 \label{eq:app_finite_model_b_first}
\end{equation}
and let
\begin{equation}
 F_T^{B}=
 \begin{cases}
 F_0+2\varepsilon,&\text{with probability }1/4,\\
 F_0,&\text{with probability }1/2,\\
 F_0-2\varepsilon,&\text{with probability }1/4.
 \end{cases}
 \label{eq:app_finite_model_b_terminal}
\end{equation}
Model B is a martingale because $\mathbb E[F_T^B\mid F_{t_1}^B]=F_0=F_{t_1}^B$. At finite mesh its second step is a three-point kernel rather than a nearest-neighbour kernel. This is intentional: it isolates the effect of moving the same terminal uncertainty into a different calendar interval.

\begin{proposition}[Terminal equivalence and pathwise non-equivalence]
\label{prop:app_finite_terminal_equivalence}
Models A and B have the same terminal distribution,
\begin{align}
 \Prob(F_T=F_0+2\varepsilon)&=\frac14,\nonumber\\
 \Prob(F_T=F_0)&=\frac12,\nonumber\\
 \Prob(F_T=F_0-2\varepsilon)&=\frac14.
 \label{eq:app_finite_common_terminal_law}
\end{align}
Hence they assign the same time-zero price to every integrable European payoff $g(F_T)$. However, for the discretely monitored barrier
\begin{equation}
 H:=\mathbf 1_{\left\{
 \max(F_{t_1},F_T)\geq F_0+\varepsilon
 \right\}}.
 \label{eq:app_finite_one_touch}
\end{equation}
the prices differ:
\begin{equation}
 \mathbb E[H^A]=\frac12,
 \qquad
 \mathbb E[H^B]=\frac14.
 \label{eq:app_finite_one_touch_values}
\end{equation}
\end{proposition}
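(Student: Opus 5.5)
The argument is direct enumeration on the finite sample spaces, carried out in three steps: the terminal laws, the European prices, and the one-touch values.

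First, compute the terminal law of Model A. By the independence and symmetry in \eqref{eq:app_finite_model_a_increments}, the sum $\xi_1+\xi_2$ takes the values $2,0,-2$ with probabilities $\tfrac14,\tfrac12,\tfrac14$. The value $0$ arises from the two paths $(+1,-1)$ and $(-1,+1)$, each of probability $\tfrac14$. Hence $F_T^A=F_0+\varepsilon(\xi_1+\xi_2)$ has the law \eqref{eq:app_finite_common_terminal_law}. Model B has this law by construction in \eqref{eq:app_finite_model_b_terminal}.

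Second, the European claim needs no further argument. Zero rates and dividends make the time-zero price of $g(F_T)$ equal to $\mathbb E[g(F_T)]$. This expectation is a functional of the terminal law alone, a finite sum over three atoms, so it agrees across the two models for every $g$. The condition $F_0>2\varepsilon$ keeps every node positive, so $g$ need only be defined on positive values. I will also note in passing that both models are martingales under their stated measures, so they are admissible pricing models. For Model A this follows from $\mathbb E[\xi_2]=0$; for Model B it was recorded above.

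Third, evaluate $H$ path by path; I expect this to be the only step needing care, namely checking the barrier event $\{\max(F_{t_1},F_T)\ge F_0+\varepsilon\}$ at each node.
\begin{itemize}
\item In Model A, if $\xi_1=+1$ then $F_{t_1}=F_0+\varepsilon$, so $H=1$ regardless of $\xi_2$; this contributes probability $\tfrac12$.
\item In Model A, if $\xi_1=-1$ then $F_{t_1}=F_0-\varepsilon$ and $F_T\in\{F_0,F_0-2\varepsilon\}$, both below the barrier, so $H=0$.
\item Hence $\mathbb E[H^A]=\tfrac12$.
\item In Model B, $F_{t_1}=F_0<F_0+\varepsilon$, so the barrier can be reached only at $T$, and only on the atom $F_0+2\varepsilon$, which has probability $\tfrac14$. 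Hence $\mathbb E[H^B]=\tfrac14$.
\end{itemize}
The difference arises because Model A's intermediate node $F_0+\varepsilon$ touches the barrier, while in Model B that activity is moved into the second interval. No earlier result is needed beyond the definitions in the two-period construction.
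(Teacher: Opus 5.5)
Your proposal is correct and follows the paper's own proof essentially step for step: you compute the law of $\xi_1+\xi_2$ and note that Model B has the same law by construction, so every European payoff agrees. You then check the barrier node by node: $H^A=1$ exactly when $\xi_1=+1$, and Model B reaches the barrier only at the atom $F_0+2\varepsilon$.
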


\begin{proof}
For Model A, the sum $\xi_1+\xi_2$ takes the values $2,0,-2$ with probabilities $1/4,1/2,1/4$, which gives~\eqref{eq:app_finite_common_terminal_law}; Model B was constructed with that law. Therefore
\begin{equation}
 \mathbb E[g(F_T^A)]=\mathbb E[g(F_T^B)]
 \label{eq:app_finite_european_equality}
\end{equation}
for every integrable $g$.

In Model A, the upper barrier is reached at $t_1$ exactly when $\xi_1=1$. If $\xi_1=-1$, the largest possible terminal value is $F_0$, so the barrier cannot subsequently be reached. Thus $\Prob(H^A=1)=1/2$. In Model B, the process remains at $F_0$ at $t_1$ and reaches the barrier only in the upper terminal state $F_0+2\varepsilon$, which has probability $1/4$.
\end{proof}

Both models also have the same terminal mean and variance,
\begin{equation}
 \mathbb E[F_T]=F_0,
 \qquad
 \operatorname{Var}(F_T)=2\varepsilon^2.
 \label{eq:app_finite_terminal_moments}
\end{equation}
but place activity differently: Model A reveals uncertainty in both periods, whereas Model B concentrates it in the second. This is not calendar-generator equivalence because the intermediate transition laws and path laws differ. It shows compactly that terminal-moment matching is weaker than terminal-law matching, terminal-law matching fixes European prices only at that maturity, and full generator equivalence is needed to fix the projected path law. A second maturity or a monitored claim distinguishes the models. Likewise, operational kernels with a common continuum limit may retain different finite-mesh prices or hedging errors; the limiting PDE is a common approximation, not an identity between the event systems.

\section{Unspanned clock risk}
\label{app:unspanned_clock_risk}
This appendix gives a minimal incomplete-market example. The point is not that every random clock makes a market incomplete. The point is that a clock state which affects a claim but is not distinguished by traded-asset payoffs is unspanned. No-arbitrage then restricts, but does not uniquely determine, the pricing law of that state. This is the standard relation between unattainable claims and non-unique equivalent martingale measures in incomplete markets \citep{BinghamKiesel2004}.

\subsection*{A one-period market with a clock label}
Take dates $0$ and $T$, zero interest, and a strictly positive stock with initial value $X_0>\delta>0$. The terminal state space is
\begin{equation}
 \Omega=\{(U,F),(U,S),(D,F),(D,S)\}.
 \label{eq:app_clockrisk_state_space}
\end{equation}
The first coordinate records an up or down stock move. The second is a clock label: $F$ denotes a fast-activity state and $S$ a slow-activity state. Assume the physical measure $P$ assigns strictly positive probability to every state.

The traded money-market account is constant and the stock payoff is
\begin{align}
 X_T(U,F)=X_T(U,S)&=X_0+\delta,\nonumber\\
 X_T(D,F)=X_T(D,S)&=X_0-\delta.
 \label{eq:app_clockrisk_stock_payoff}
\end{align}
Thus the stock distinguishes price direction but not the clock label within a price state.

For every $\theta\in(0,1)$, define a probability measure $Q_\theta$ by
\begin{align}
 Q_\theta(U,F)&=\frac{\theta}{2},&
 Q_\theta(U,S)&=\frac{1-\theta}{2},\nonumber\\
 Q_\theta(D,F)&=\frac{\theta}{2},&
 Q_\theta(D,S)&=\frac{1-\theta}{2}.
 \label{eq:app_clockrisk_measure_family}
\end{align}
Every $Q_\theta$ is equivalent to $P$ because all four probabilities are strictly positive.

\begin{proposition}[A family of equivalent martingale measures]
\label{prop:app_clockrisk_emm_family}
Every measure $Q_\theta$, $\theta\in(0,1)$, is an equivalent martingale measure for the traded stock and money-market account. Hence the market is arbitrage free but incomplete.
\end{proposition}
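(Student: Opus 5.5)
The plan is to verify the martingale property directly on the four-state space, then invoke the first fundamental theorem of asset pricing for arbitrage-freeness, and finally use the second fundamental theorem (or an explicit unattainable claim) for incompleteness.

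\textbf{Step 1: equivalence and normalisation.} Each $Q_\theta$ assigns $\theta/2$ or $(1-\theta)/2$ to each state. The four masses sum to $\theta+(1-\theta)=1$. All four are strictly positive for $\theta\in(0,1)$, and $P$ also charges every state, so $Q_\theta\sim P$, as already noted after \eqref{eq:app_clockrisk_measure_family}.

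\textbf{Step 2: martingale property.} With zero interest the money-market account is constant, so it is trivially a martingale. For the stock, use \eqref{eq:app_clockrisk_stock_payoff}: the up states carry total mass $\theta/2+(1-\theta)/2=1/2$, and the down states likewise carry $1/2$. Hence
\[
\mathbb E^{Q_\theta}[X_T]=\tfrac12(X_0+\delta)+\tfrac12(X_0-\delta)=X_0 .
\]
Since the time-zero sigma-field is trivial, this one-period identity is exactly the martingale condition. The hypothesis $X_0>\delta$ keeps the stock strictly positive, so the model is well specified.

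\textbf{Step 3: no arbitrage.} I would argue directly rather than cite the theorem. Take any self-financing strategy $(\phi^0,\phi^1)$ with zero initial cost. Its terminal value is $\phi^1(X_T-X_0)$, which has $Q_\theta$-expectation $0$. Because $Q_\theta$ charges every state, a nonnegative terminal value with zero expectation must vanish identically, so no arbitrage exists. This is the easy half of the first fundamental theorem.

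\textbf{Step 4: incompleteness.} This is the only step requiring a choice; it is still elementary. I would exhibit the clock-sensitive claim $C=\mathbf 1_{\{\text{label}=F\}}$. Its price under $Q_\theta$ is $\mathbb E^{Q_\theta}[C]=\theta$, which varies with $\theta$. If $C$ were attainable by a portfolio $a+b(X_T-X_0)$, every equivalent martingale measure would assign it the same price $a$, which is a contradiction.

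Alternatively, one can note that traded payoffs are constant on $\{(U,F),(U,S)\}$ and on $\{(D,F),(D,S)\}$. They therefore span only a two-dimensional subspace of $\mathbb R^4$, whereas $C$ separates $(U,F)$ from $(U,S)$. Either way the market is incomplete. This is consistent with the second fundamental theorem, since the equivalent martingale measure is not unique.
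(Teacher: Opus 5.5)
Your proof is correct and follows the paper's route. The up and down states each carry total $Q_\theta$-mass $1/2$, which gives the martingale property, and existence of an equivalent martingale measure gives no arbitrage; you prove this easy direction of the fundamental theorem directly, where the paper simply cites it. Incompleteness then follows from the non-uniqueness of $Q_\theta$ together with the unattainability of the clock claim $\mathbf 1_{\{F\}}$, which the paper defers to Lemma~\ref{lem:app_clockrisk_unattainable} using exactly your argument that traded payoffs are constant within each price direction.
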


\begin{proof}
Under $Q_\theta$, the total probability of an up move is
\begin{equation}
 Q_\theta(U,F)+Q_\theta(U,S)=\frac12,
\end{equation}
and the total probability of a down move is also $1/2$. Consequently,
\begin{equation}
 \mathbb E^{Q_\theta}[X_T]
 =\frac12(X_0+\delta)+\frac12(X_0-\delta)=X_0.
 \label{eq:app_clockrisk_stock_martingale}
\end{equation}
The bank account is constant, so discounted traded prices are martingales under every $Q_\theta$. Existence of these equivalent martingale measures gives absence of arbitrage in this finite-state model. Their non-uniqueness, or equivalently the unattainability shown below, gives incompleteness.
\end{proof}

\subsection*{A clock-sensitive claim}
Consider the claim
\begin{equation}
 H_F:=\mathbf 1_{\{F\}},
 \label{eq:app_clockrisk_claim}
\end{equation}
which pays one unit if the terminal clock state is fast and zero if it is slow. Its expectation under $Q_\theta$ is
\begin{equation}
 \pi_\theta(H_F)
 =\mathbb E^{Q_\theta}[H_F]
 =Q_\theta(U,F)+Q_\theta(D,F)=\theta.
 \label{eq:app_clockrisk_claim_price}
\end{equation}
No-arbitrage therefore does not select a unique value for the clock-sensitive claim.

\begin{lemma}[The clock claim is not attainable]
\label{lem:app_clockrisk_unattainable}
The claim $H_F$ cannot be replicated by a static portfolio in the stock and money-market account.
\end{lemma}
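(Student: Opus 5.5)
The plan is a direct measurability argument, with the pricing non-uniqueness argument as an independent check. A static portfolio in the money-market account and the stock holds $\beta$ units of cash and $\eta$ shares. Its terminal payoff is $\beta+\eta X_T(\omega)$, because the rate is zero and the account is constant. By \eqref{eq:app_clockrisk_stock_payoff}, $X_T$ depends only on the first coordinate of $\omega$, so every such payoff is constant on the pairs $\{(U,F),(U,S)\}$ and $\{(D,F),(D,S)\}$.

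Replicating $H_F$ would require
\begin{align*}
\beta+\eta(X_0+\delta)&=H_F(U,F)=1,\\
\beta+\eta(X_0+\delta)&=H_F(U,S)=0,
\end{align*}
which is impossible. This is the whole contradiction. Including the down states only adds two further equations that fail for the same reason, so no case analysis in $\beta,\eta$ is needed.

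As a cross-check that ties the lemma to Proposition~\ref{prop:app_clockrisk_emm_family}, I would also note the following. If $H_F$ were attainable with initial cost $\beta+\eta X_0$, then every equivalent martingale measure $Q_\theta$ would have to assign it that same price, since discounted traded prices are $Q_\theta$-martingales and so $\mathbb E^{Q_\theta}[\beta+\eta X_T]=\beta+\eta X_0$. But \eqref{eq:app_clockrisk_claim_price} gives the price $\theta$, which varies over $(0,1)$. This contradiction shows the claim is unattainable.

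There is no real obstacle. The only point needing care is the word ``static'': in this one-period model there are no intermediate trading dates, so static and dynamic strategies coincide and the argument covers all self-financing strategies.
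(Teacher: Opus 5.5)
Your proof is correct and is essentially the paper's argument: any payoff $\beta+\eta X_T$ takes the same value on $(U,F)$ and $(U,S)$, and likewise on $(D,F)$ and $(D,S)$, whereas $H_F$ separates the two states within a price direction. Your cross-check via the varying prices $\mathbb E^{Q_\theta}[H_F]=\theta$, and your remark that static and dynamic strategies coincide in this one-period model, are valid additions the paper's proof does not need.
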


\begin{proof}
A one-period portfolio holding $a$ units of cash and $b$ units of stock has terminal payoff
\begin{equation}
 a+bX_T.
 \label{eq:app_clockrisk_portfolio_payoff}
\end{equation}
This payoff has the same value in $(U,F)$ and $(U,S)$ because the stock payoff is $X_0+\delta$ in both states. It also has the same value in $(D,F)$ and $(D,S)$. By contrast, $H_F$ equals one in the fast state and zero in the slow state, including within a fixed price direction. No choice of $a$ and $b$ can reproduce it.
\end{proof}

The interval in~\eqref{eq:app_clockrisk_claim_price} records values under strictly positive equivalent measures; its endpoints correspond to non-equivalent limiting measures and are not asserted to be complete superhedging bounds. The central point is only that stock trading spans the up/down direction but not the fast/slow label within either price state. The free parameter $\theta$ is therefore a clock-risk pricing choice.

Randomness alone is not sufficient. Clock risk creates the displayed incompleteness only when the clock state is economically relevant to a claim, is not spanned by traded payoffs, and remains unrestricted by the martingale conditions. A traded clock-linked claim priced at $c$ would impose $\E^Q[H_F]=c$ and fix $\theta=c$ within this family, although full completion of the four-state market would require enough independent securities to span all state payoffs. In an operational-time model the label may represent event activity, renewal state, or latent liquidity; the example shows why the stock drift restriction need not select its risk premium.

\end{document}